\newcommand{\C}[2]{
\ifthenelse{#1=0 \and #2=0}{\textsf{\upshape C}}
{\ifthenelse{#2=0}{\textsf{\upshape C}^{#1}}
{\textsf{\upshape C}^{#1,#2}}}
}
\let\oldr@@t\r@@t
\def\r@@t#1#2{%
\setbox0=\hbox{$\oldr@@t#1{#2\,}$}\dimen0=\ht0
\advance\dimen0-0.2\ht0
\setbox2=\hbox{\vrule height\ht0 depth -\dimen0}%
{\box0\lower0.4pt\box2}}
\LetLtxMacro{\oldsqrt}{\sqrt}
\renewcommand*{\sqrt}[2][\ ]{\oldsqrt[#1]{#2}}
\theoremstyle{plain}
\newtheorem{theorem}{Theorem}
\newtheorem{lemma}[theorem]{Lemma}
\newtheorem{proposition}[theorem]{Proposition}
\newtheorem{corollary}[theorem]{Corollary}
\theoremstyle{definition}
\newtheorem{definition}[theorem]{Definition}
\theoremstyle{remark}
\newtheorem{remark}[theorem]{Remark}
\numberwithin{equation}{section}
\numberwithin{figure}{section}
\numberwithin{table}{section}
\begin{document}

\title[Leakage of Functionally Generated Trading Strategies]{Leakage of Rank-Dependent Functionally Generated Trading Strategies}

\author{Kangjianan Xie}

\address{Kangjianan Xie\\
Department of Mathematics\\
University College London\\
Gower Street\\
London\\
WC1E 6BT\\
UK}

\email{kangjianan.xie.14@ucl.ac.uk}

\thanks{I am grateful to Johannes Ruf for many constructive discussions and his detailed reading and helpful comments.}  


\keywords{Additive generation; Leakage effect; multiplicative generation; portfolio analysis; rank-dependent portfolio generating function; Stochastic Portfolio Theory}

\date{\today}

\begin{abstract} 
This paper investigates the so-called leakage effect of trading strategies generated functionally from rank-dependent portfolio generating functions. This effect measures the loss in wealth of trading strategies due to renewing the portfolio constituent stocks. Theoretically, the leakage effect of a trading strategy is expressed explicitly by a finite-variation term. The computation of the leakage is different from what previous research has suggested. The method to estimate leakage in discrete time is then introduced with some practical considerations. An empirical example illustrates the leakage of the corresponding trading strategies under different constituent list sizes.
\end{abstract}

\maketitle

\section{Introduction}

Stochastic Portfolio Theory (SPT), which was established by Robert Fernholz, is used as a theoretical tool for applications in equity markets. It is also used for analysing portfolios with controlled behaviour under very general conditions, most of which are consistent with observed features of the real market. See \cite{MR1894767} for details and \cite{FK_survey} for a survey of SPT. One essential topic in SPT is to invest in an equity market with trading strategies constructed systematically through the method of functional generation. The portfolio generating functions depend merely on current observables: the market capitalisation of each stock in the market. Over sufficiently large investment horizons, the corresponding trading strategies theoretically outperform the corresponding capitalisation-weighted index with probability one. It is also remarkably easy to implement these trading strategies, as there is no stochastic integration or drift involved in computing the wealth of these trading strategies. Hence the need for estimation is reduced.

\cite{MR1861997} generalises the method of functional generation to a class of portfolio generating functions that identify market weights not by their company index, but by their ranks in terms of values. This generalisation leads to rank-dependent trading strategies and provides a mathematical interpretation of the size effect; see also \cite{banner2018diversification}. This generalisation also suggests a correction term in the so-called master formula of a trading strategy when the component stocks in this trading strategy change under specific circumstances. Here, the master formula expresses the wealth of a trading strategy through its corresponding portfolio generating function and a finite-variation process under a deterministic function form. This correction term is closely related to the so-called leakage effect, which measures the loss in the wealth of the trading strategy due to untimely renewing the portfolio constituent stocks. See \cite{MR2187296} and \cite{MR3082660} for further research on the rank-dependent stock market models.

\citet{MR3663643} define a new method of functional generation, the additive functional generation, different from the multiplicative functional generation introduced by \cite{F_generating}. The cases when portfolio generating functions are rank-dependent are then studied for both additive and multiplicative functional generation. The results of \citet{MR3663643} are generalised by \cite{ruf2018generalised} in that the dependence of the portfolio generating function on some finite-variation process is allowed. The trading strategies generated functionally from such functions are therefore called generalised functionally generated trading strategies. Also see \cite{MR3246899}, \cite{Schied_2016}, and \cite{karatzas2018trading} for similar research.

\cite{ruf2019impact} analyse functionally generated trading strategies in the presence of transaction costs empirically. These trading strategies invest in a certain number of the largest stocks in terms of market capitalisations. Every time the portfolio constituent list is renewed, new stocks (indexed by their names) are introduced into the portfolio to replace some old stocks. In this sense, these trading strategies are not strategies that invest in fixed companies, but are actually more close to rank-dependent trading strategies.

In this paper, we first analyse the leakage effect of rank-dependent generalised functionally generated trading strategies theoretically. Our computation of the leakage differs from that of \cite{MR1861997}. Then we estimate the leakage empirically. An outline of the paper is as follows. Section~\ref{sec 2} specifies the market model and recalls the methods of both multiplicative and additive functional generation. Section~\ref{sec 3} presents the master formulas for trading strategies generated from rank-dependent generalised portfolio generating functions. The definition of the leakage comes naturally from the master formulas and is computed theoretically. Section~\ref{sec 4} provides the method to estimate the leakage in discrete time. Section~\ref{sec 5} discusses the procedure of using historical data to backtest the portfolio performance and estimating the leakage. Section~\ref{sec 6} studies several trading strategies empirically.

\section{The method of functional generation}\label{sec 2}

\subsection*{Model setup} Assume that we are given a filtered probability space $(\Omega,\mathcal{F}(\infty),\mathcal{F}(\cdot),\textsf{P})$ with $\mathcal{F}(\cdot)$ right-continuous and $\mathcal{F}(0)=\{\emptyset,\Omega\}$. Denote
\[
\Delta^{n}_{+}=\left\{(x_{1},\cdots,x_{n})'\in(0,1)^{n}:\sum_{i=1}^{n}x_{i}=1\right\},\quad n\in\mathbb{N}.
\]
For $x=(x_{1},\cdots,x_{n})'\in\Delta^{n}_{+}$, its corresponding ranked vector is denoted by $\boldsymbol{x}=(x_{(1)},\cdots,x_{(n)})'$ with components
\[
\max_{i\in\{1,\cdots,n\}}x_{i}=x_{(1)}\geq x_{(2)}\geq\cdots\geq x_{(n-1)}\geq x_{(n)}=\min_{i\in\{1,\cdots,n\}}x_{i}.
\]
Denote further
\[
\mathbb{W}^{n}_{+}=\left\{\left(x_{(1)},\cdots,x_{(n)}\right)'\in\Delta^{n}:1>x_{(1)}\geq\cdots\geq x_{(n)}>0\right\},\quad n\in\mathbb{N}.
\]
Then the rank operator $\mathfrak{R}:\Delta^{n}_{+}\rightarrow\mathbb{W}^{n}_{+}$ maps $x$ to $\boldsymbol{x}$.

We put ourselves in a frictionless equity market $\mathcal{M}$ with $d\geq2$ companies, each of which always has exactly one share of stock outstanding in the market. For each company $i\in\{1,\cdots,d\}$, we use $\mu_{i}(\cdot)$ to denote its market weight process, which is computed by dividing its capitalisation process by the process of total capitalisation of all $d$ companies in the market. We assume that $\mu_{i}(\cdot)$ is a continuous, non-negative semimartingale, for all $i\in\{1,\cdots,d\}$. The $\Delta^{d}_{+}$-valued market weights process is then denoted by $\mu(\cdot)=(\mu_{1}(\cdot),\cdots,\mu_{d}(\cdot))'$. 

\begin{definition}\label{def 1}
The market weights process $\mu(\cdot)$ is \textit{pathwise mutually non-degenerate} if, for all $t\geq0$,
\begin{enumerate}
\item $\left\{t;\mu_{i}(t)=\mu_{j}(t)\right\}$ has Lebesgue measure zero, for all $i,j\in\{1,\cdots,d\}$ with $i\neq j$, a.s.;
\item $\left\{t;\mu_{i}(t)=\mu_{j}(t)=\mu_{k}(t)\right\}=\emptyset$, for all $i,j,k\in\{1,\cdots,d\}$ with $i<j<k$, a.s.
\qed
\end{enumerate}
\end{definition}

By our assumptions, the ranked market weights process $\boldsymbol{\mu}(\cdot)$, given by
\[
\boldsymbol{\mu}(\cdot)=\mathfrak{R}(\mu(\cdot))=(\mu_{(1)}(\cdot),\cdots,\mu_{(d)}(\cdot))',
\]
is a $\mathbb{W}^{d}_{+}$-valued continuous, non-negative semimartingale (see Theorem~2.2 in \cite{MR2428716}). Moreover, let $p_{t}$ be a random permutation of $\{1,\cdots,d\}$ that associates the name index of stocks with their ranks at time $t$, for all $t\geq0$. To wit, we have
\begin{equation}\label{eq pt}
\mu_{p_{t}(k)}(t)=\mu_{(k)}(t),\quad k\in\{1,\cdots,d\},~t\geq0.
\end{equation}
In particular, if $\mu_{(k)}(t)=\mu_{(k+1)}(t)$, for some $k\in\{1,\cdots,d-1\}$, then we set $p_{t}(k)<p_{t}(k+1)$.

Instead of investing in all companies of the market $\mathcal{M}$, we are only allowed investing in the top $k<d$ companies in terms of their market capitalisations every time when rebalancing the portfolio. We denote the market that contains these top $k$ companies by $\mathcal{M}^{k}$. To proceed, we denote the market weights process on $\mathcal{M}^{k}$ by $\widetilde{\mu}(\cdot)=(\widetilde{\mu}_{1}(\cdot),\cdots,\widetilde{\mu}_{k}(\cdot))'$ with components
\begin{equation}\label{eq 1}
\widetilde{\mu}_{i}(t)=\mathfrak{M}(t)\mu_{(i)}(t),\quad i\in\{1,\cdots,d\},~t\geq0.
\end{equation}
Here, 
\[
\mathfrak{M}(\cdot)=\frac{1}{\sum_{j=1}^{k}\mu_{(j)}(\cdot)}
\]
represents the multiplier of the market weights from the market $\mathcal{M}$ to the market $\mathcal{M}^{k}$. Note that
\[
\sum_{j=1}^{k}\widetilde{\mu}_{j}(t)=1, \quad t\geq0,
\]
by \eqref{eq 1}, i.e., $\widetilde{\mu}(\cdot)$ is $\mathbb{W}^{k}_{+}$-valued. In particular, since $\boldsymbol{\mu}(\cdot)$ is a $d$-dimensional continuous, non-negative semimartingale, $\widetilde{\mu}(\cdot)$ is a $k$-dimensional continuous, non-negative semimartingale by \eqref{eq 1}. 

\subsection*{Target trading strategy}

A target trading strategy, as defined in the following, is constructed to indicate the number of shares of each stock that one would like to hold every time after rebalancing the portfolio.

\begin{definition}\label{def 2}
An $\mathbb{R}^{k}$-valued process $\phi(\cdot)=(\phi_{1}(\cdot),\cdots,\phi_{k}(\cdot))'$ is called a \textit{target trading strategy} with respect to $\widetilde{\mu}(\cdot)$ if it is predictable and integrable with respect to $\widetilde{\mu}(\cdot)$, and satisfies
\begin{equation}\label{eq V}
V^{\phi}(\cdot)-V^{\phi}(0)=\int_{0}^{\cdot}\sum_{j=1}^{k}\phi_{j}(t)\mathrm{d}\widetilde{\mu}_{j}(t).
\end{equation}
Here, the process
\begin{equation}\label{eq 2}
V^{\phi}(\cdot)=\sum_{j=1}^{k}\phi_{j}(\cdot)\widetilde{\mu}_{j}(\cdot)
\end{equation}
is interpreted as the \textit{wealth} process of $\phi(\cdot)$ relative to the market $\mathcal{M}^{k}$. A target trading strategy $\phi(\cdot)$ is \textit{long-only} if it is nonnegative at any time.
\qed
\end{definition}

\begin{remark}
When implementing a target trading strategy $\phi(\cdot)$ with respect to $\widetilde{\mu}(\cdot)$ in the real market, the investor needs to buy the new stock to replace the old stock when the portfolio constituents change after rebalancing the portfolio. As the trade is made discretely, loss in the wealth $V^{\phi}(\cdot)$ will occur from buying the new stock at a higher price than selling the old stock when rebalancing. This loss in the wealth is reflected by the leakage effect.
\qed
\end{remark}

For a given target trading strategy $\phi(\cdot)$ with respect to $\widetilde{\mu}(\cdot)$, we use $\pi(\cdot)=(\pi_{1}(\cdot),\cdots,\pi_{k}(\cdot))'$ to denote its portfolio weights process, which has components
\[
\pi_{i}(\cdot)=\frac{\phi_{i}(\cdot)\widetilde{\mu}_{i}(\cdot)}{V^{\phi}(\cdot)},\quad i\in\{1,\cdots,k\}.
\]
Below, we shall only consider long-only target trading strategies, i.e., target trading strategies with nonnegative $\pi(\cdot)$.

\subsection*{Portfolio generating functions}

A regular function shall be used as a portfolio generating function to generate trading strategies functionally. Recall the definition of such a function from \cite{MR3663643} with necessary adjustments consistent with our settings. See Chapter~4 in \cite{xie2019portfolio} for a generalised version of the function when depending on a finite-variation process.

\begin{definition}\label{def 3}
A continuous function $G:\mathbb{W}^{k}_{+}\rightarrow\mathbb{R}$ is said to be \textit{regular} for $\widetilde{\mu}(\cdot)$ if
\begin{enumerate}
\item there exists a measurable function $DG=(D_{1}G,\cdots,D_{k}G)':\mathbb{W}^{k}_{+}\rightarrow\mathbb{R}^{k}$ such that the process $\vartheta(\cdot)=(\vartheta_{1}(\cdot),\cdots,\vartheta_{k}(\cdot))'$ with components
\begin{equation}\label{eq 3}
\vartheta_{i}(\cdot)=D_{i}G(\widetilde{\mu}(\cdot)),\quad i\in\{1,\cdots,d\},
\end{equation}
is predictable and integrable with respect to $\widetilde{\mu}(\cdot)$; and
\item the continuous, adapted process
\begin{equation}\label{eq 4}
\Gamma(\cdot)=G(\widetilde{\mu}(0))-G(\widetilde{\mu}(\cdot))+\int_{0}^{\cdot}\sum_{j=1}^{k}\vartheta_{j}(t)\mathrm{d}\widetilde{\mu}_{j}(t)
\end{equation}
is of finite variation on the interval $[0,T]$, for all $T\geq0$.
\end{enumerate}
Moreover, $G$ is \textit{Lyapunov} for $\widetilde{\mu}(\cdot)$ if $\Gamma(\cdot)$ given by \eqref{eq 4} is non-decreasing.
\qed
\end{definition}

Recall the methods of multiplicative and additive functional generation from \cite{MR3663643}. Denote the target trading strategies with respect to $\widetilde{\mu}(\cdot)$ generated multiplicatively and additively from a regular function by $\psi(\cdot)=(\psi_{1}(\cdot),\cdots,\psi_{k}(\cdot))'$ and $\varphi(\cdot)=(\varphi_{1}(\cdot),\cdots,\varphi_{k}(\cdot))'$, respectively. Then the wealth processes $V^{\psi}(\cdot)$ and $V^{\varphi}(\cdot)$ are expressed through the master formulas introduced in the following two lemmas, respectively.

\begin{lemma}\label{lemma 1}
Let $\psi(\cdot)$ be the target trading strategy with respect to $\widetilde{\mu}(\cdot)$ generated multiplicatively from a given regular function $G:\mathbb{W}^{k}_{+}\rightarrow(0,\infty)$ for $\widetilde{\mu}(\cdot)$ with $1/G(\widetilde{\mu}(\cdot))$ locally bounded. Then the wealth process of $\psi(\cdot)$ is given by the master formula
\begin{equation}\label{eq 6}
V^{\psi}(\cdot)=G(\widetilde{\mu}(\cdot))\exp\left(\int_{0}^{\cdot}\frac{\mathrm{d}\Gamma(t)}{G(\widetilde{\mu}(t))}\right)
\end{equation}
with the finite-variation process $\Gamma(\cdot)$ given by \eqref{eq 4}. Moreover, the portfolio weights process $\pi(\cdot)$ corresponding to $\psi(\cdot)$ has components
\begin{equation}\label{eq 7}
\pi_{i}(t)=\left(1+\frac{\vartheta_{i}(t)-
\sum_{j=1}^{d}\vartheta_{j}(t)\widetilde{\mu}_{j}(t)}{G(\widetilde{\mu}(t))}\right)\widetilde{\mu}_{i}(t),\quad t\geq0,
\end{equation}
with $\vartheta_{i}(\cdot)$ given by \eqref{eq 3}, for all $i\in\{1,\cdots,d\}$.
\end{lemma}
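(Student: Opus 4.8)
The plan is to derive the master formula \eqref{eq 6} and the portfolio weights \eqref{eq 7} directly from the defining relation of multiplicative functional generation, which (following \cite{MR3663643}) stipulates that $\psi(\cdot)$ is the target trading strategy whose portfolio weights satisfy
\[
\pi_{i}(t)=\left(\vartheta_{i}(t)+G(\widetilde{\mu}(t))-\sum_{j=1}^{k}\vartheta_{j}(t)\widetilde{\mu}_{j}(t)\right)\frac{\widetilde{\mu}_{i}(t)}{G(\widetilde{\mu}(t))},
\]
equivalently $\psi_{i}(t)=\bigl(\vartheta_{i}(t)+G(\widetilde{\mu}(t))-\sum_{j}\vartheta_{j}(t)\widetilde{\mu}_{j}(t)\bigr)V^{\psi}(t)/G(\widetilde{\mu}(t))$. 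Dividing through by $\widetilde{\mu}_{i}(t)$ and using $\sum_{j}\pi_{j}=1$ gives \eqref{eq 7} immediately once $V^{\psi}$ is identified, so the substance is \eqref{eq 6}. First I would write $Z(\cdot):=V^{\psi}(\cdot)/G(\widetilde{\mu}(\cdot))$ and show $Z$ solves the linear SDE $\mathrm{d}Z(t)=Z(t)\,\mathrm{d}\Gamma(t)/G(\widetilde{\mu}(t))$ with $Z(0)=1$; since $\Gamma$ is of finite variation and $1/G(\widetilde{\mu}(\cdot))$ is locally bounded, the unique solution is the pathwise exponential $\exp\bigl(\int_{0}^{\cdot}\mathrm{d}\Gamma(t)/G(\widetilde{\mu}(t))\bigr)$, which is exactly \eqref{eq 6}.

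To obtain the SDE for $Z$, the key computation is an integration-by-parts / It\^o expansion. Applying the product rule to $V^{\psi}(\cdot)=Z(\cdot)G(\widetilde{\mu}(\cdot))$ and comparing with the self-financing condition \eqref{eq V}, namely $\mathrm{d}V^{\psi}(t)=\sum_{j}\psi_{j}(t)\,\mathrm{d}\widetilde{\mu}_{j}(t)$, I would substitute the expression for $\psi_{j}(t)$ above. The term $\sum_{j}\vartheta_{j}(t)\,\mathrm{d}\widetilde{\mu}_{j}(t)$ appears, and by the definition \eqref{eq 4} of $\Gamma$ one has $\sum_{j}\vartheta_{j}(t)\,\mathrm{d}\widetilde{\mu}_{j}(t)=\mathrm{d}G(\widetilde{\mu}(t))+\mathrm{d}\Gamma(t)$; meanwhile the term $\bigl(G(\widetilde{\mu}(t))-\sum_{j}\vartheta_{j}(t)\widetilde{\mu}_{j}(t)\bigr)\sum_{j}\mathrm{d}\widetilde{\mu}_{j}(t)$ vanishes because $\sum_{j=1}^{k}\widetilde{\mu}_{j}(t)\equiv 1$ forces $\sum_{j}\mathrm{d}\widetilde{\mu}_{j}(t)=0$. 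After collecting terms and dividing by $G(\widetilde{\mu}(t))>0$, the stochastic integrals against $\mathrm{d}\widetilde{\mu}_{j}$ cancel and one is left with the finite-variation identity $\mathrm{d}Z(t)=Z(t)\,\mathrm{d}\Gamma(t)/G(\widetilde{\mu}(t))$, as wanted. I would also check the integrability prerequisites — $\psi$ is predictable and $\widetilde{\mu}$-integrable because $\vartheta$ is (by regularity of $G$) and $V^{\psi}/G(\widetilde{\mu})$ is locally bounded — so that \eqref{eq V} and all the manipulations are legitimate.

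The main obstacle, such as it is, lies in the bookkeeping of the two cancellations: ensuring that the "numeraire" term multiplying $\sum_{j}\mathrm{d}\widetilde{\mu}_{j}$ really does drop out (this is where the constraint $\widetilde{\mu}(\cdot)\in\mathbb{W}^{k}_{+}$ is essential, and one must be careful that it is the $k$-dimensional sum, not the $d$-dimensional one appearing in \eqref{eq 7}), and that after substituting $\sum_{j}\vartheta_{j}\,\mathrm{d}\widetilde{\mu}_{j}=\mathrm{d}G(\widetilde{\mu})+\mathrm{d}\Gamma$ the leftover $\mathrm{d}G(\widetilde{\mu}(t))$ matches the $G(\widetilde{\mu})\,\mathrm{d}Z + Z\,\mathrm{d}G(\widetilde{\mu})$ coming from the product rule. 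Since every process involved is a continuous semimartingale and $\Gamma$ is of finite variation, there are no nontrivial quadratic-covariation terms beyond those already absorbed into $\Gamma$ via \eqref{eq 4}, so no further It\^o correction is needed. Once \eqref{eq 6} is in hand, \eqref{eq 7} follows by dividing the formula for $\psi_{i}(t)\widetilde{\mu}_{i}(t)$ by $V^{\psi}(t)=G(\widetilde{\mu}(t))Z(t)$, which is the definition of $\pi_{i}(t)$.
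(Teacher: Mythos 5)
The paper does not actually prove this lemma: it is quoted from \citet{MR3663643} (``Recall the methods of multiplicative and additive functional generation\dots''), so there is no in-house argument to compare against. Your verification is, in substance, the standard proof of that cited result and it is sound: with $Z:=V^{\psi}/G(\widetilde{\mu})$, the self-financing identity \eqref{eq V}, the substitution $\sum_{j=1}^{k}\vartheta_{j}\,\mathrm{d}\widetilde{\mu}_{j}=\mathrm{d}G(\widetilde{\mu})+\mathrm{d}\Gamma$ read off from \eqref{eq 4}, and the cancellation forced by $\sum_{j=1}^{k}\widetilde{\mu}_{j}\equiv1$ reduce everything to $G(\widetilde{\mu})\,\mathrm{d}Z=Z\,\mathrm{d}\Gamma$, whose solution with $Z(0)=1$ is the exponential in \eqref{eq 6}; \eqref{eq 7} then follows by dividing through by $V^{\psi}$ (and you are right to flag that the sums there should run to $k$, not $d$). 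Two minor points deserve care. First, in the integration by parts $\mathrm{d}\bigl(ZG(\widetilde{\mu})\bigr)=Z\,\mathrm{d}G(\widetilde{\mu})+G(\widetilde{\mu})\,\mathrm{d}Z+\mathrm{d}[Z,G(\widetilde{\mu})]$ you cannot simply declare the bracket zero, since a priori $Z$ is only a semimartingale; you should first observe that $G(\widetilde{\mu})\,\mathrm{d}Z=Z\,\mathrm{d}\Gamma-\mathrm{d}[Z,G(\widetilde{\mu})]$ has a finite-variation right-hand side, so that (using local boundedness of $1/G(\widetilde{\mu})$) $Z$ is continuous and of finite variation, whence the bracket vanishes and the clean equation follows. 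Second, taking the weights \eqref{eq 7} as the \emph{definition} of multiplicative generation turns your argument into a uniqueness/verification statement; existence of such a self-financing $\psi$ is supplied by the explicit construction $\psi_{i}=\bigl(\vartheta_{i}+G(\widetilde{\mu})-\sum_{j=1}^{k}\vartheta_{j}\widetilde{\mu}_{j}\bigr)\exp\bigl(\int_{0}^{\cdot}\mathrm{d}\Gamma(t)/G(\widetilde{\mu}(t))\bigr)$ in the reference, which is the direction that argument runs. Neither point is a genuine gap.
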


\begin{lemma}\label{lemma 2}
Let $\varphi(\cdot)$ be the target trading strategy with respect to $\widetilde{\mu}(\cdot)$ generated additively from a given regular function $G:\mathbb{W}^{k}_{+}\rightarrow\mathbb{R}$ for $\widetilde{\mu}(\cdot)$. Then the wealth process of $\varphi(\cdot)$ is given by the master formula
\begin{equation}\label{eq 8}
V^{\varphi}(\cdot)=G(\widetilde{\mu}(\cdot))+\Gamma(\cdot)
\end{equation}
with the finite-variation process $\Gamma(\cdot)$ given by \eqref{eq 4}. Moreover, the portfolio weights process $\pi(\cdot)$ corresponding to $\varphi(\cdot)$ has components
\begin{equation}\label{eq 9}
\pi_{i}(t)=\left(1+\frac{\vartheta_{i}(t)-\sum_{j=1}^{d}\vartheta_{j}(t)\widetilde{\mu}_{j}(t)}{V^{\varphi}(t)}\right)\widetilde{\mu}_{i}(t),\quad t\geq0,
\end{equation}
with $\vartheta_{i}(\cdot)$ given by \eqref{eq 3}, for all $i\in\{1,\cdots,d\}$.
\end{lemma}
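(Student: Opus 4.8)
The plan is to take the defining relation of the additively generated strategy (recalled from \cite{MR3663643}) as the starting point and verify the three assertions of the lemma — that $\varphi(\cdot)$ is a target trading strategy, that its wealth obeys \eqref{eq 8}, and that its portfolio weights are \eqref{eq 9} — by direct computation, the only structural input being the budget constraint $\sum_{j=1}^{k}\widetilde{\mu}_{j}(t)=1$ inherited from $\widetilde{\mu}(\cdot)$ being $\mathbb{W}^{k}_{+}$-valued. Concretely, the components of $\varphi(\cdot)$ are $\varphi_{i}(t)=\vartheta_{i}(t)+C(t)$ for $i\in\{1,\cdots,k\}$, where $\vartheta(\cdot)$ is given by \eqref{eq 3} and the common ``defect'' term is
\[
C(t):=G(\widetilde{\mu}(0))+\int_{0}^{t}\sum_{j=1}^{k}\vartheta_{j}(s)\,\mathrm{d}\widetilde{\mu}_{j}(s)-\sum_{j=1}^{k}\vartheta_{j}(t)\widetilde{\mu}_{j}(t),
\]
so that $\varphi(\cdot)=\vartheta(\cdot)+C(\cdot)\ones$ with $\ones=(1,\cdots,1)'\in\mathbb{R}^{k}$. (Throughout, the index $j$ ranges over $\{1,\cdots,k\}$.)

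The first step is to check that $\varphi(\cdot)$ is a target trading strategy in the sense of Definition~\ref{def 2}. Predictability is immediate: $\vartheta(\cdot)$ is predictable by Definition~\ref{def 3}, the process $\widetilde{\mu}(\cdot)$ is continuous and adapted hence predictable, and $C(\cdot)$ — a constant plus a (continuous, adapted) stochastic integral minus a product of predictable processes — is then predictable as well. For integrability with respect to $\widetilde{\mu}(\cdot)$ I would use the decomposition $\varphi(\cdot)=\vartheta(\cdot)+C(\cdot)\ones$: here $\vartheta(\cdot)$ is $\widetilde{\mu}(\cdot)$-integrable by Definition~\ref{def 3}, while $C(\cdot)\ones$ is $\widetilde{\mu}(\cdot)$-integrable with identically vanishing integral because $\sum_{j}\widetilde{\mu}_{j}(\cdot)\equiv1$ is constant and hence integrates any predictable process to zero. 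Thus $\varphi(\cdot)$ is predictable and $\widetilde{\mu}(\cdot)$-integrable, and
\[
\int_{0}^{\cdot}\sum_{j=1}^{k}\varphi_{j}(t)\,\mathrm{d}\widetilde{\mu}_{j}(t)=\int_{0}^{\cdot}\sum_{j=1}^{k}\vartheta_{j}(t)\,\mathrm{d}\widetilde{\mu}_{j}(t).
\]

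The second step computes the wealth. Using \eqref{eq 2} together with $\sum_{j}\widetilde{\mu}_{j}(t)=1$,
\[
V^{\varphi}(t)=\sum_{j=1}^{k}\bigl(\vartheta_{j}(t)+C(t)\bigr)\widetilde{\mu}_{j}(t)=\sum_{j=1}^{k}\vartheta_{j}(t)\widetilde{\mu}_{j}(t)+C(t)=G(\widetilde{\mu}(0))+\int_{0}^{t}\sum_{j=1}^{k}\vartheta_{j}(s)\,\mathrm{d}\widetilde{\mu}_{j}(s),
\]
which equals $G(\widetilde{\mu}(t))+\Gamma(t)$ after rearranging \eqref{eq 4}; this is the master formula \eqref{eq 8}. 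Combining it with the gains identity from the first step and $\Gamma(0)=0$ yields $V^{\varphi}(\cdot)-V^{\varphi}(0)=\int_{0}^{\cdot}\sum_{j}\varphi_{j}\,\mathrm{d}\widetilde{\mu}_{j}$, i.e.\ \eqref{eq V}. Finally, \eqref{eq 8} gives $C(t)=V^{\varphi}(t)-\sum_{j}\vartheta_{j}(t)\widetilde{\mu}_{j}(t)$, and substituting this into $\pi_{i}(t)=\varphi_{i}(t)\widetilde{\mu}_{i}(t)/V^{\varphi}(t)$ reproduces \eqref{eq 9}.

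The part that deserves the most care — and that carries essentially all the content — is the integrability argument in the first step: the defect term $C(\cdot)$ need not be integrable against any individual $\widetilde{\mu}_{i}(\cdot)$, as it contains the merely-predictable factor $\sum_{j}\vartheta_{j}\widetilde{\mu}_{j}$, so one cannot split the gains process component by component. The resolution is that $C(\cdot)\ones$ is integrable against the vector semimartingale $\widetilde{\mu}(\cdot)$ with zero integral precisely because $\sum_{j}\widetilde{\mu}_{j}\equiv1$; this is the very cancellation that also renders the defect term invisible to the self-financing condition. Everything else is algebra built on $\sum_{j}\widetilde{\mu}_{j}=1$ and the rearrangement of \eqref{eq 4} — in particular, and in contrast with the multiplicative master formula of Lemma~\ref{lemma 1}, neither an exponential nor a local-boundedness hypothesis on $1/G$ is needed.
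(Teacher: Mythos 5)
Your proof is correct and is essentially the standard argument for additive generation from \citet{MR3663643}: the paper itself states Lemma~\ref{lemma 2} without proof, merely citing that reference, and your reconstruction --- writing $\varphi(\cdot)=\vartheta(\cdot)+C(\cdot)\ones$ with the defect term $C(\cdot)$, using $\sum_{j=1}^{k}\widetilde{\mu}_{j}\equiv1$ to kill the contribution of $C(\cdot)\ones$ to the gains process, and back-substituting $C(t)=V^{\varphi}(t)-\sum_{j}\vartheta_{j}(t)\widetilde{\mu}_{j}(t)$ to get \eqref{eq 9} --- matches that source's proof. You also correctly flag the one genuinely delicate point (integrability of $C(\cdot)\ones$ as a vector integrand rather than component-wise), so nothing is missing.
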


As indicated by \eqref{eq 6} and \eqref{eq 8}, the computation of the wealth of functionally generated trading strategies involves no stochastic integration but only market observables. 

\section{Leakage of functionally generated trading strategies}\label{sec 3}

In this section, we analyse the effect of renewing the constituent stocks of a trading strategy on its wealth. To start, we recall the local time process of a continuous semimartingale.

\begin{definition}\label{def lt}
The \textit{local time process} of an $\mathbb{R}$-valued continuous semimartingale $X$ at the origin is given by
\begin{equation}\label{eq 10}
\mathfrak{L}_{X}(\cdot)=\frac{1}{2}\left(\left|X(t)\right|-\left|X(0)\right|-\int_{0}^{\cdot}\mathrm{sgn}(X(t))\mathrm{d}X(t)\right),
\end{equation}
where $\mathrm{sgn}(y)=2\mathbf{1}_{y\in(0,\infty)}-1$.
\qed
\end{definition}

The local time $\mathfrak{L}_{X}(t)$ measures the time that $X(\cdot)$ has spent at 0 up to time $t$. Hence, the process $\mathfrak{L}_{X}(\cdot)$ is of finite variation. We refer to \citet{MR1121940} for a general study on local times.

\begin{lemma}\label{lemma 3} (Theorem~2.3 in \citet{MR2428716}).
The ranked market weights process $\boldsymbol{\mu}(\cdot)$ has components
\begin{equation}\label{eq l3}
\mu_{(i)}(\cdot)=\mu_{(i)}(0)+\int_{0}^{\cdot}\sum_{j=1}^{d}\frac{\boldsymbol{1}_{\left\{\mu_{j}(t)=\mu_{(i)}(t)\right\}}}{N_{i}(\mu(t))}
\mathrm{d}\mu_{j}(t)+\sum_{k=i+1}^{d}\int_{0}^{\cdot}\frac{\mathrm{d}\Lambda^{(i,k)}(t)}{N_{i}(\mu(t))}-\sum_{k=1}^{i-1}\int_{0}^{\cdot}\frac{\mathrm{d}\Lambda^{(k,i)}(t)}{N_{i}(\mu(t))},
\end{equation}
for all $i\in\{1,\cdots,d\}$. Here,
\[
N_{i}(x)=\sum_{j=1}^{d}\boldsymbol{1}_{x_{j}=x_{(i)}}
\]
is the number of components of $x=(x_{1},\cdots,x_{d})'\in\Delta^{d}_{+}$ that coalesce at a given rank $i\in\{1,\cdots,d\}$, and
\[
\Lambda^{(i,j)}(\cdot)=\mathfrak{L}_{\mu_{(i)}-\mu_{(j)}}(\cdot),\quad 1\leq i<j\leq d,
\]
is the local time process of the continuous semimartingale $\mu_{(i)}(\cdot)-\mu_{(j)}(\cdot)\geq0$ at the origin given by \eqref{eq 10}.
\end{lemma}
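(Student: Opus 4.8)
The statement is Theorem~2.3 of \citet{MR2428716}, and the plan is to reconstruct that argument. The organising idea is that it suffices to decompose each partial sum $S_{m}(\cdot):=\sum_{i=1}^{m}\mu_{(i)}(\cdot)$, $m\in\{0,1,\ldots,d\}$, and then take differences, since $\mu_{(i)}(\cdot)=S_{i}(\cdot)-S_{i-1}(\cdot)$. What makes this reduction useful is that $S_{m}(\cdot)=M_{m}(\mu(\cdot))$ with $M_{m}(x):=\max\{\sum_{j\in A}x_{j}:|A|=m\}$ a finite maximum of linear maps, hence a convex, piecewise-linear function of the semimartingale $\mu(\cdot)$, to which It\^o--Tanaka-type change-of-variables results apply.

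First I would settle the two-component picture, which is both the base case and the local model near a collision. For $d=2$ one has $\mu_{(1)}=\tfrac12(\mu_{1}+\mu_{2})+\tfrac12|\mu_{1}-\mu_{2}|$ and $\mu_{(2)}=\tfrac12(\mu_{1}+\mu_{2})-\tfrac12|\mu_{1}-\mu_{2}|$, and Tanaka's formula (Definition~\ref{def lt}) applied to $|\mu_{1}-\mu_{2}|$ produces a decomposition of $\mu_{(1)}$ with $\d\mu_{1}$- and $\d\mu_{2}$-weights $\tfrac12(1\pm\sgn(\mu_{1}-\mu_{2}))$ and a finite-variation term $\mathfrak{L}_{\mu_{1}-\mu_{2}}(\cdot)$. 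To reach the form of \eqref{eq l3} I would then: (i)~replace those weights by $\mathbf{1}_{\{\mu_{j}(t)=\mu_{(i)}(t)\}}/N_{i}(\mu(t))$, which agree with them off the time set $\{t:\mu_{1}(t)=\mu_{2}(t)\}$, a set that is null for every integrator appearing (immediate under the non-degeneracy of Definition~\ref{def 1}, and handled with more care in the general setting of \citet{MR2428716}); and (ii)~observe that, since $\mu_{(1)}-\mu_{(2)}=|\mu_{1}-\mu_{2}|$ is already nonnegative, $\Lambda^{(1,2)}=\mathfrak{L}_{\mu_{(1)}-\mu_{(2)}}$ equals \emph{twice} $\mathfrak{L}_{\mu_{1}-\mu_{2}}$ (two-sided versus one-sided local time), which turns the coefficient $1$ into $1/N_{1}=1/2$ on the support of $\Lambda^{(1,2)}$, matching \eqref{eq l3}.

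Next I would treat general $d$ through the convex change of variables: mollifying $M_{m}$, applying It\^o's formula, and passing to the limit gives $S_{m}(\mu(\cdot))=S_{m}(\mu(0))+\int_{0}^{\cdot}\nabla M_{m}(\mu(t))\cdot\d\mu(t)+K_{m}(\cdot)$ with $K_{m}(\cdot)$ nondecreasing (the limit of the nonnegative second-order term, hence of finite variation) and carried by the kink locus $\{x:x_{(m)}=x_{(m+1)}\}$. Differencing, $\nabla M_{i}(\mu(t))-\nabla M_{i-1}(\mu(t))$ has $j$-th coordinate $\mathbf{1}_{\{\mu_{j}(t)=\mu_{(i)}(t)\}}/N_{i}(\mu(t))$ after the same cleanup on collision times, which is the integrand in \eqref{eq l3}; it then remains to identify $K_{i}(\cdot)-K_{i-1}(\cdot)$ with $\sum_{k=i+1}^{d}\int_{0}^{\cdot}\frac{\d\Lambda^{(i,k)}(t)}{N_{i}(\mu(t))}-\sum_{k=1}^{i-1}\int_{0}^{\cdot}\frac{\d\Lambda^{(k,i)}(t)}{N_{i}(\mu(t))}$. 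I would do this by localising near each collision time so that only the ranks taking part in that collision move, reducing to the two-component computation inside the tie block; the factor $1/N_{i}$ and the grouping over all relevant $k$ then appear exactly as the $1/2$ did for $d=2$, with $N_{i}$ equal parts in place of two.

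The hard part is this last identification. The mollification argument only delivers a \emph{nonnegative, kink-supported} process $K_{m}$; extracting its explicit form in terms of the pairwise local times $\Lambda^{(i,k)}=\mathfrak{L}_{\mu_{(i)}-\mu_{(k)}}$ with the right $1/N_{i}$ coefficients needs a genuine local analysis of the geometry of $M_{m}$ near higher-order collisions, where $M_{m}$ is no longer modelled on a single absolute value, together with the occupation-time bookkeeping required to discard the sign ambiguities on the (time-)negligible collision sets. Under the pathwise mutual non-degeneracy of Definition~\ref{def 1}, triple and higher collisions are absent and $N_{i}(\mu(t))\equiv1$ off a Lebesgue-null set of times, so only the adjacent local times $\Lambda^{(i,i+1)}$ and $\Lambda^{(i-1,i)}$ survive with coefficient $\tfrac12$, and this step becomes routine.
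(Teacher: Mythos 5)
The paper does not actually prove Lemma~\ref{lemma 3}: it is imported verbatim, with proof, from Theorem~2.3 of \citet{MR2428716}. So the only fair comparison is against that source's argument, and your architecture --- writing $\sum_{i\le m}\mu_{(i)}=\max_{|A|=m}\sum_{j\in A}\mu_j$, applying a convex/piecewise-linear change of variables, and differencing consecutive partial maxima --- is the standard route and is consistent with how the cited result is obtained. Your two-dimensional analysis is correct in its details: the Tanaka weights $\tfrac12(1\pm\sgn(\mu_1-\mu_2))$ do agree with $\mathbf{1}_{\{\mu_j=\mu_{(1)}\}}/N_1$ off the collision set, and the relation $\Lambda^{(1,2)}=\mathfrak{L}_{|\mu_1-\mu_2|}=2\,\mathfrak{L}_{\mu_1-\mu_2}$ (modulo the term $\int\mathbf{1}_{\{\mu_1=\mu_2\}}\,\mathrm{d}(\mu_1-\mu_2)$) is exactly what converts the coefficient $1$ into the $1/N_1=1/2$ appearing in \eqref{eq l3}.

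The gap is the one you yourself flag, and it is not cosmetic: for general $d$ the mollification argument only yields that $K_i(\cdot)-K_{i-1}(\cdot)$ is a finite-variation process supported on the kink loci, and the entire content of the cited theorem is the identification of this process with $\sum_{k>i}\int N_i^{-1}\,\mathrm{d}\Lambda^{(i,k)}-\sum_{k<i}\int N_i^{-1}\,\mathrm{d}\Lambda^{(k,i)}$, where the $\Lambda^{(i,k)}$ are local times of the \emph{ranked} differences $\mu_{(i)}-\mu_{(k)}$ rather than of the name-indexed differences your localisation produces. Your proposed reduction ``localise near each collision and reuse the two-component computation inside the tie block'' does not by itself handle triple or higher-order collisions, where several $\Lambda^{(i,k)}$ charge the same times and the bookkeeping that produces exactly the $1/N_i$ weights is the delicate step (\citet{MR2428716} devote the bulk of their proof to it). Your fallback --- that under the pathwise mutual non-degeneracy of Definition~\ref{def 1} only adjacent local times survive and the step becomes routine --- proves \eqref{eq 4.1.mu}, not \eqref{eq l3}: Lemma~\ref{lemma 3} is stated without that assumption, and Proposition~\ref{prop 1} uses it in that generality (the $N_i$ factors and non-adjacent local times appear throughout \eqref{eq 4.1.6} and \eqref{eq 4.1.L}). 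Also, your step~(i) assertion that the collision set is ``null for every integrator appearing'' is false for general continuous semimartingales (the finite-variation parts of the $\mu_j$ may charge $\{\mu_i=\mu_j\}$); what is true, and what you need, is that the relevant \emph{differences} of integrals over that set reorganise into local-time terms, which is again the step you have not carried out.
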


Lemma~\ref{lemma 3} is used to prove the following proposition.

\begin{proposition}\label{prop 1}
For a given regular function $G$ for $\widetilde{\mu}(\cdot)$, the corresponding finite-variation process $\Gamma(\cdot)$ given by \eqref{eq 4} satisfies
\[
\Gamma(\cdot)=\widetilde{\Gamma}(\cdot)+L(\cdot).
\]
Here,
\begin{equation}\label{eq 4.1.6}
\begin{aligned}
\widetilde{\Gamma}(\cdot)&=G(\widetilde{\mu}(0))+\int_{0}^{\cdot}\sum_{i=1}^{k}\sum_{j=1}^{d}\frac{\vartheta_{i}(t)\mathfrak{M}(t)}{N_{i}(\mu(t))}\mathbf{1}_{\{\mu_{j}(t)=\mu_{(i)}(t)\}}\mathrm{d}\mu_{j}(t)\\
&\quad-G(\widetilde{\mu}(\cdot))-\int_{0}^{\cdot}\sum_{i,j=1}^{k}\sum_{\nu=1}^{d}\frac{\vartheta_{i}(t)\widetilde{\mu}_{i}(t)\mathfrak{M}(t)}{N_{j}(\mu(t))}\mathbf{1}_{\{\mu_{\nu}(t)=\mu_{(j)}(t)\}}\mathrm{d}\mu_{\nu}(t)\\
&\quad+\int_{0}^{\cdot}\sum_{i,j,\nu=1}^{k}\mathfrak{M}^{2}(t)\vartheta_{i}(t)\widetilde{\mu}_{i}(t)\mathrm{d}\left[\mu_{(j)},\mu_{(\nu)}\right](t)-\int_{0}^{\cdot}\sum_{i,j=1}^{k}\mathfrak{M}^{2}(t)\vartheta_{i}(t)\mathrm{d}\left[\mu_{(i)},\mu_{(j)}\right](t)\\
&\quad+\int_{0}^{\cdot}\sum_{i=1}^{k}\sum_{j=i+1}^{k}\frac{\vartheta_{i}(t)\mathfrak{M}(t)}{N_{i}(\mu(t))}\mathrm{d}\Lambda^{(i,j)}(t)-\int_{0}^{\cdot}\sum_{i=1}^{k}\sum_{j=1}^{i-1}\frac{\vartheta_{i}(t)\mathfrak{M}(t)}{N_{i}(\mu(t))}\mathrm{d}\Lambda^{(j,i)}(t)\\
&\quad-\int_{0}^{\cdot}\sum_{i,j=1}^{k}\sum_{\nu=j+1}^{k}\frac{\vartheta_{i}(t)\widetilde{\mu}_{i}(t)\mathfrak{M}(t)}{N_{j}(\mu(t))}\mathrm{d}\Lambda^{(j,\nu)}(t)+\int_{0}^{\cdot}\sum_{i,j=1}^{k}\sum_{\nu=1}^{j-1}\frac{\vartheta_{i}(t)\widetilde{\mu}_{i}(t)\mathfrak{M}(t)}{N_{j}(\mu(t))}\mathrm{d}\Lambda^{(\nu,j)}(t)
\end{aligned}
\end{equation}
and
\begin{equation}\label{eq 4.1.L}
L(\cdot)=\int_{0}^{\cdot}\sum_{i=1}^{k}\sum_{j=k+1}^{d}\frac{\vartheta_{i}(t)\mathfrak{M}(t)}{N_{i}(\mu(t))}\mathrm{d}\Lambda^{(i,j)}(t)-\int_{0}^{\cdot}\sum_{i,j=1}^{k}\sum_{\nu=k+1}^{d}\frac{\vartheta_{i}(t)\widetilde{\mu}_{i}(t)\mathfrak{M}(t)}{N_{j}(\mu(t))}\mathrm{d}\Lambda^{(j,\nu)}(t).
\end{equation}
are processes of finite variation on $[0,T]$, for all $T\geq0$.
\end{proposition}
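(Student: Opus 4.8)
The plan is to expand the defining relation \eqref{eq 4} for $\Gamma(\cdot)$ by unfolding $\widetilde{\mu}_i(\cdot)=\mathfrak{M}(\cdot)\mu_{(i)}(\cdot)$ and then substituting the semimartingale decomposition \eqref{eq l3} of the ranked weights from Lemma~\ref{lemma 3}; the summands carrying a local time $\Lambda^{(i,j)}$ that ``exits'' the sub-market $\mathcal{M}^k$ (i.e.\ with $i\le k<j$) are collected into $L(\cdot)$, and everything else into $\widetilde{\Gamma}(\cdot)$.

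First I would apply integration by parts for continuous semimartingales to $\widetilde{\mu}_i=\mathfrak{M}\mu_{(i)}$, giving $\mathrm{d}\widetilde{\mu}_i=\mathfrak{M}\,\mathrm{d}\mu_{(i)}+\mu_{(i)}\,\mathrm{d}\mathfrak{M}+\mathrm{d}[\mathfrak{M},\mu_{(i)}]$. Writing $\mathfrak{M}=1/S$ with $S(\cdot)=\sum_{j=1}^{k}\mu_{(j)}(\cdot)\in[1/d,1]$ (since $1/d\le\mu_{(1)}\le S\le 1$), Itô's formula yields $\mathrm{d}\mathfrak{M}=-\mathfrak{M}^{2}\,\mathrm{d}S+\mathfrak{M}^{3}\,\mathrm{d}[S]$ with $\mathrm{d}S=\sum_{j=1}^{k}\mathrm{d}\mu_{(j)}$, $\mathrm{d}[S]=\sum_{j,\nu=1}^{k}\mathrm{d}[\mu_{(j)},\mu_{(\nu)}]$, and $\mathrm{d}[\mathfrak{M},\mu_{(i)}]=-\mathfrak{M}^{2}\sum_{j=1}^{k}\mathrm{d}[\mu_{(j)},\mu_{(i)}]$ (the finite-variation local-time pieces of $\mu_{(i)}$ contribute nothing to covariations). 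Feeding these into $\int_{0}^{\cdot}\sum_{i=1}^{k}\vartheta_i\,\mathrm{d}\widetilde{\mu}_i$ and using $\mu_{(i)}\mathfrak{M}^{2}=\widetilde{\mu}_i\mathfrak{M}$, $\mu_{(i)}\mathfrak{M}^{3}=\widetilde{\mu}_i\mathfrak{M}^{2}$ already produces the two covariation integrals on the third line of \eqref{eq 4.1.6}, together with the two ``drift'' contributions $\int_{0}^{\cdot}\sum_{i=1}^{k}\vartheta_i\mathfrak{M}\,\mathrm{d}\mu_{(i)}$ and $-\int_{0}^{\cdot}\sum_{i,\nu=1}^{k}\vartheta_i\widetilde{\mu}_i\mathfrak{M}\,\mathrm{d}\mu_{(\nu)}$.

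Next I would substitute \eqref{eq l3} for $\mathrm{d}\mu_{(i)}$ and for $\mathrm{d}\mu_{(\nu)}$ (relabelling the dummy rank index there, which clashes with the fixed integer $k$) into those two contributions. The $\sum_j\mathbf{1}_{\{\mu_j=\mu_{(i)}\}}N_i(\mu)^{-1}\,\mathrm{d}\mu_j$ parts reproduce the $\mathrm{d}\mu_j$-integrals on the first and second lines of \eqref{eq 4.1.6}; the local-time parts $\sum_{\ell>i}N_i(\mu)^{-1}\mathrm{d}\Lambda^{(i,\ell)}-\sum_{\ell<i}N_i(\mu)^{-1}\mathrm{d}\Lambda^{(\ell,i)}$ must be split according to whether the larger rank stays $\le k$ or exceeds $k$. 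Since the outer index always satisfies $i\le k$, the only summands touching a rank beyond $k$ are of the form $\Lambda^{(i,j)}$ with $i\le k<j$, and these assemble precisely into $L(\cdot)$ as in \eqref{eq 4.1.L}, while the summands with both ranks in $\{1,\dots,k\}$ assemble into the fifth and seventh lines of \eqref{eq 4.1.6}. Adding the boundary terms $G(\widetilde{\mu}(0))-G(\widetilde{\mu}(\cdot))$ from \eqref{eq 4} then yields $\Gamma(\cdot)=\widetilde{\Gamma}(\cdot)+L(\cdot)$. I expect the main obstacle to be purely organisational: keeping the several nested sums over name indices versus rank indices straight, tracking signs line by line, and checking that no $\mathrm{d}\mu_j$- or covariation-term is left unaccounted for.

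It remains to verify that $L(\cdot)$ is of finite variation on each $[0,T]$; then so is $\widetilde{\Gamma}(\cdot)=\Gamma(\cdot)-L(\cdot)$, because $\Gamma(\cdot)$ is of finite variation by Definition~\ref{def 3}. Here I would use pathwise mutual non-degeneracy: if $i<j$ and $\mu_{(i)}(t)=\mu_{(j)}(t)$ then $\mu_{(i)}(t)=\cdots=\mu_{(j)}(t)$, so part~(2) of Definition~\ref{def 1} forces $\Lambda^{(i,j)}\equiv 0$ whenever $j\ge i+2$. Consequently the inner sums $\sum_{j=k+1}^{d}$ and $\sum_{\nu=k+1}^{d}$ in \eqref{eq 4.1.L} collapse to the single terms $j=k+1$, resp.\ $\nu=k+1$, each nonzero only for the rank $k$, and parts~(1)--(2) of Definition~\ref{def 1} give $N_k(\mu(\cdot))=2$ on the support of $\mathrm{d}\Lambda^{(k,k+1)}$, so that $L(\cdot)=\tfrac12\int_{0}^{\cdot}\mathfrak{M}(t)\bigl(\vartheta_k(t)-\sum_{i=1}^{k}\vartheta_i(t)\widetilde{\mu}_i(t)\bigr)\,\mathrm{d}\Lambda^{(k,k+1)}(t)$. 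This is an integral against the non-decreasing process $\Lambda^{(k,k+1)}(\cdot)$, which survives (with density $1/2$) in the finite-variation part of $\widetilde{\mu}_k(\cdot)$ — the lower-rank local times $\Lambda^{(j,j+1)}$, $1\le j\le k-1$, cancelling pairwise in $S(\cdot)=\sum_{\nu=1}^{k}\mu_{(\nu)}(\cdot)$ and only $\Lambda^{(k,k+1)}$ failing to cancel because $\mu_{(k+1)}(\cdot)\notin\mathcal{M}^k$ — while $\mathfrak{M}(\cdot)$ and $\widetilde{\mu}(\cdot)$ are bounded and $\vartheta(\cdot)$ is $\widetilde{\mu}$-integrable by regularity of $G$; hence $\int_{0}^{T}|\cdot|\,\mathrm{d}\Lambda^{(k,k+1)}<\infty$, and $L(\cdot)$, together with $\widetilde{\Gamma}(\cdot)$, is of finite variation.
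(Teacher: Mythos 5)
Your main computation follows the paper's proof exactly: integration by parts on $\widetilde{\mu}_i=\mathfrak{M}\mu_{(i)}$, It\^o's formula for $\mathrm{d}\mathfrak{M}=-\mathfrak{M}^2\,\mathrm{d}S+\mathfrak{M}^3\,\mathrm{d}[S]$, substitution of the decomposition \eqref{eq l3} from Lemma~\ref{lemma 3}, and sorting the local-time terms according to whether both ranks lie in $\{1,\dots,k\}$ (into $\widetilde{\Gamma}$) or one rank exceeds $k$ (into $L$). The identity $\Gamma=\widetilde{\Gamma}+L$ and the bookkeeping of the covariation and $\mathrm{d}\mu_j$ terms are all as in the paper, and your closing logic for $\widetilde{\Gamma}$ --- namely $\widetilde{\Gamma}=\Gamma-L$ with $\Gamma$ of finite variation by Definition~\ref{def 3} --- is also the paper's.

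The one genuine flaw is in your justification that $L(\cdot)$ is of finite variation. You invoke pathwise mutual non-degeneracy (Definition~\ref{def 1}) to kill all local times $\Lambda^{(i,j)}$ with $j\geq i+2$ and to collapse $L$ to the single term $\tfrac12\int_0^{\cdot}\mathfrak{M}(t)\bigl(\vartheta_k(t)-\sum_{i=1}^k\vartheta_i(t)\widetilde{\mu}_i(t)\bigr)\,\mathrm{d}\Lambda^{(k,k+1)}(t)$. But Proposition~\ref{prop 1} does not assume pathwise mutual non-degeneracy; that hypothesis is introduced only in the second corollary, where precisely this simplified form of $L$ is derived. As stated, your argument proves finite variation of $L$ only in that special case. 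The repair is simpler than your detour: each $\Lambda^{(i,j)}$ is a non-decreasing process, so each summand of \eqref{eq 4.1.L} is an integral against a non-decreasing integrator and is therefore of finite variation on $[0,T]$ as soon as the integrand is locally integrable with respect to $\mathrm{d}\Lambda^{(i,j)}$ (which holds since $\mathfrak{M}$ and $\widetilde{\mu}$ are bounded and $\vartheta$ is integrable by regularity of $G$); $L$ is then a finite sum of such terms and needs no structural assumption on the coincidences of the ranked weights.
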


\begin{proof}
By It\^{o}'s lemma and \eqref{eq 1}, we have
\[
\mathrm{d}\widetilde{\mu}_{i}(t)=\mathrm{d}\left(\mathfrak{M}(t)\mu_{(i)}(t)\right)=\mathfrak{M}(t)\mathrm{d}\mu_{(i)}(t)+\mu_{(i)}(t)\mathrm{d}\mathfrak{M}(t)+\mathrm{d}\left[\mu_{(i)},\mathfrak{M}\right](t),
\]
for all $i\in\{1,\cdots,k\}$, and
\[
\mathrm{d}\mathfrak{M}(t)=-\mathfrak{M}^{2}(t)\sum_{j=1}^{k}\mathrm{d}\mu_{(j)}(t)+\mathfrak{M}^{3}(t)\sum_{i,j=1}^{k}\mathrm{d}\left[\mu_{(i)},\mu_{(j)}\right](t).
\]
The above two equations imply
\begin{equation}\label{eq 4.l.1}
\begin{aligned}
\mathrm{d}\widetilde{\mu}_{i}(t)&=\mathfrak{M}(t)\mathrm{d}\mu_{(i)}(t)-\mathfrak{M}(t)\widetilde{\mu}_{i}(t)\sum_{j=1}^{k}\mathrm{d}\mu_{(j)}(t)+\mathfrak{M}^{2}(t)\widetilde{\mu}_{i}(t)\sum_{j,\nu=1}^{k}\mathrm{d}\left[\mu_{(j)},\mu_{(\nu)}\right](t)\\
&\quad-\mathfrak{M}^{2}(t)\sum_{j=1}^{k}\mathrm{d}\left[\mu_{(i)},\mu_{(j)}\right](t),\quad i\in\{1,\cdots,k\}.
\end{aligned}
\end{equation}
Then Lemma~\ref{lemma 3} and \eqref{eq 4.l.1} suggest
\[
\begin{aligned}
\mathrm{d}\widetilde{\mu}_{i}(t)&=\frac{\mathfrak{M}(t)}{N_{i}(\mu(t))}\sum_{j=1}^{d}\mathbf{1}_{\{\mu_{j}(t)=\mu_{(i)}(t)\}}\mathrm{d}\mu_{j}(t)-\mathfrak{M}(t)\widetilde{\mu}_{i}(t)\sum_{j=1}^{k}\sum_{\nu=1}^{d}\frac{\mathbf{1}_{\{\mu_{\nu}(t)=\mu_{(j)}(t)\}}}{N_{j}(\mu(t))}\mathrm{d}\mu_{\nu}(t)\\
&\quad+\mathfrak{M}^{2}(t)\widetilde{\mu}_{i}(t)\sum_{j,\nu=1}^{k}\mathrm{d}\left[\mu_{(j)},\mu_{(\nu)}\right](t)-\mathfrak{M}^{2}(t)\sum_{j=1}^{k}\mathrm{d}\left[\mu_{(i)},\mu_{(j)}\right](t)\\
&\quad+\frac{\mathfrak{M}(t)}{N_{i}(\mu(t))}\sum_{j=i+1}^{d}\mathrm{d}\Lambda^{(i,j)}(t)-\frac{\mathfrak{M}(t)}{N_{i}(\mu(t))}\sum_{j=1}^{i-1}\mathrm{d}\Lambda^{(j,i)}(t)\\
&\quad-\widetilde{\mu}_{i}(t)\sum_{j=1}^{k}\frac{\mathfrak{M}(t)}{N_{j}(\mu(t))}\sum_{\nu=j+1}^{d}\mathrm{d}\Lambda^{(j,\nu)}(t)-\widetilde{\mu}_{i}(t)\sum_{j=1}^{k}\frac{\mathfrak{M}(t)}{N_{j}(\mu(t))}\sum_{\nu=1}^{j-1}\mathrm{d}\Lambda^{(\nu,j)}(t),
\end{aligned}
\]
for all $i\in\{1,\cdots,k\}$. The above equation, together with \eqref{eq 4} and some computation, imply \eqref{eq 4.1.6} and \eqref{eq 4.1.L}. Moreover, since both $\Gamma(\cdot)$ and $L(\cdot)$ are of finite variation on $[0,T]$, for all $T\geq0$, so is $\widetilde{\Gamma}(\cdot)$.
\end{proof}

\begin{remark}\label{rmk L}
The process $L(\cdot)$ given by \eqref{eq 4.1.L} consists of all local time components between stocks that may leak out of and stocks that may be included into the portfolio after rebalancing. If $G$ is Lyapunov for $\widetilde{\mu}(\cdot)$ by Definition~\ref{def 3}, $L(\cdot)$ is positive and increasing from $0$. In this case, $L(\cdot)$ measures the contribution to $\Gamma(\cdot)$ from rebalancing the portfolio by replacing stocks at the same prices.

However, when rebalancing the portfolio in the real market, one can only sell the stocks leaking out the portfolio at lower prices relative to the purchase prices of new stocks. Therefore, $L(\cdot)$ should be subtracted from $\Gamma(\cdot)$ and hence the wealth of the target trading strategy, as $\Gamma(\cdot)$ contributes to the wealth through the master formulas \eqref{eq 6} or \eqref{eq 8}. This observation also indicates a method to estimate the leakage, as we will see in the following.
\qed
\end{remark}

The financial meaning of $L(\cdot)$ suggested in Remark~\ref{rmk L} becomes more clear under some further assumptions on the regular function $G$ and the market $\mathcal{M}$, as shown in the following corollaries.

\begin{corollary}\label{corollary 1}
For a given regular function $G$ for $\widetilde{\mu}(\cdot)$, if its corresponding measurable function $DG$ is symmetric, i.e., if
\begin{equation}\label{eq 4.DG}
D_{i}G(x)=D_{j}G(x),\quad x\in\Delta^{k}_{+},
\end{equation}
for all $i,j\in\{1,\cdots,k\}$ with $x_{i}=x_{j}$, then the finite variation process $\widetilde{\Gamma}(\cdot)$ given by \eqref{eq 4.1.6} simplifies to
\[
\begin{aligned}
\widetilde{\Gamma}(\cdot)&=G(\widetilde{\mu}(0))+\int_{0}^{\cdot}\sum_{i=1}^{k}\sum_{j=1}^{d}\frac{\vartheta_{i}(t)\mathfrak{M}(t)}{N_{i}(\mu(t))}\mathbf{1}_{\{\mu_{j}(t)=\mu_{(i)}(t)\}}\mathrm{d}\mu_{j}(t)\\
&\quad-G(\widetilde{\mu}(\cdot))-\int_{0}^{\cdot}\sum_{i,j=1}^{k}\sum_{\nu=1}^{d}\frac{\vartheta_{i}(t)\widetilde{\mu}_{i}(t)\mathfrak{M}(t)}{N_{j}(\mu(t))}\mathbf{1}_{\{\mu_{\nu}(t)=\mu_{(j)}(t)\}}\mathrm{d}\mu_{\nu}(t)\\
&\quad+\int_{0}^{\cdot}\sum_{i,j,\nu=1}^{k}\mathfrak{M}^{2}(t)\vartheta_{i}(t)\widetilde{\mu}_{i}(t)\mathrm{d}\left[\mu_{(j)},\mu_{(\nu)}\right](t)-\int_{0}^{\cdot}\sum_{i,j=1}^{k}\mathfrak{M}^{2}(t)\vartheta_{i}(t)\mathrm{d}\left[\mu_{(i)},\mu_{(j)}\right](t).
\end{aligned}
\]
\end{corollary}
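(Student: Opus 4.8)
The plan is to show that, under the symmetry hypothesis \eqref{eq 4.DG}, the four local-time integrals on the last two lines of \eqref{eq 4.1.6} vanish identically; the first four terms of \eqref{eq 4.1.6} are left untouched, which is exactly the assertion.

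First I would record that $\Lambda^{(i,j)}(\cdot)\equiv 0$ whenever $j\geq i+2$. Indeed, $\mu_{(i)}(t)=\mu_{(j)}(t)$ together with the ordering $\mu_{(i)}\geq\mu_{(i+1)}\geq\cdots\geq\mu_{(j)}$ would force $\mu_{(i)}(t)=\mu_{(i+1)}(t)=\mu_{(i+2)}(t)$, a coalescence of three market weights, which is ruled out by Definition~\ref{def 1}(2); hence $\mu_{(i)}(\cdot)-\mu_{(j)}(\cdot)>0$ a.s., and by \eqref{eq 10} the local time at the origin of a strictly positive continuous semimartingale is the zero process. So in each of the four sums over local-time indices in \eqref{eq 4.1.6} only the adjacent-rank local times $\Lambda^{(m,m+1)}(\cdot)$, $m\in\{1,\dots,k-1\}$, survive. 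Relabelling the summation indices so that each surviving term is an integral against $\mathrm{d}\Lambda^{(m,m+1)}(t)$, I would then collect the coefficient of $\mathrm{d}\Lambda^{(m,m+1)}(t)$ in each of the two pairs. The two integrals on the penultimate line of \eqref{eq 4.1.6} contribute the coefficient $\mathfrak{M}(t)\bigl(\vartheta_m(t)/N_m(\mu(t))-\vartheta_{m+1}(t)/N_{m+1}(\mu(t))\bigr)$, while the two on the last line contribute $\bigl(\mathfrak{M}(t)\sum_{i=1}^{k}\vartheta_i(t)\widetilde\mu_i(t)\bigr)\bigl(1/N_{m+1}(\mu(t))-1/N_m(\mu(t))\bigr)$.

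Now I would use the standard support property of semimartingale local times (see \citet{MR1121940}): the Stieltjes measure $\mathrm{d}\Lambda^{(m,m+1)}(t)$ is carried by the set $\{t:\mu_{(m)}(t)=\mu_{(m+1)}(t)\}$, so both coefficients may be evaluated on that set. On it, Definition~\ref{def 1}(2) forbids any further coalescence at three consecutive ranks, so $N_m(\mu(t))=N_{m+1}(\mu(t))=2$; this already annihilates the last-line contribution, without any use of symmetry. For the penultimate-line contribution, $\mu_{(m)}(t)=\mu_{(m+1)}(t)$ gives $\widetilde\mu_m(t)=\mathfrak{M}(t)\mu_{(m)}(t)=\mathfrak{M}(t)\mu_{(m+1)}(t)=\widetilde\mu_{m+1}(t)$ by \eqref{eq 1}, and then the symmetry \eqref{eq 4.DG}, together with \eqref{eq 3}, yields $\vartheta_m(t)=D_mG(\widetilde\mu(t))=D_{m+1}G(\widetilde\mu(t))=\vartheta_{m+1}(t)$, so the coefficient $\tfrac12\mathfrak{M}(t)\bigl(\vartheta_m(t)-\vartheta_{m+1}(t)\bigr)$ vanishes as well. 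Hence all four local-time integrals are the zero process and \eqref{eq 4.1.6} reduces to the claimed expression.

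The index bookkeeping — rewriting each surviving local time as $\Lambda^{(m,m+1)}$ and combining the two reindexed sums — is routine. The only step that genuinely needs care, and which I expect to be the main (though minor) obstacle, is the appeal to the support property of $\mathrm{d}\Lambda^{(m,m+1)}$, which is what licenses replacing an integrand by its value on $\{\mu_{(m)}=\mu_{(m+1)}\}$, together with the two uses of Definition~\ref{def 1}(2): first to discard the non-adjacent local times, and then to obtain $N_m=N_{m+1}=2$ on the relevant set.
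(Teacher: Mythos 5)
There is a genuine gap: your argument leans on Definition~\ref{def 1}(2) (no triple coalescence) both to discard the non-adjacent local times $\Lambda^{(i,j)}$, $j\geq i+2$, and to conclude $N_m(\mu(t))=N_{m+1}(\mu(t))=2$ on $\{\mu_{(m)}=\mu_{(m+1)}\}$. But Corollary~\ref{corollary 1} does not assume pathwise mutual non-degeneracy --- that hypothesis is only introduced in the \emph{next} corollary, and it is not a standing assumption of the paper (only continuity and the semimartingale property of $\mu(\cdot)$ are assumed globally). Without it, three or more ranked weights may coalesce, $\Lambda^{(i,j)}$ with $j\geq i+2$ need not vanish, and your reduction to adjacent-rank local times fails at the first step.

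The repair is close to what you already wrote, and it recovers the paper's own argument: apply the support property of the local-time measure to \emph{every} pair $1\leq i<j\leq k$, not just adjacent ones. On the support of $\mathrm{d}\Lambda^{(i,j)}$ one has $\mu_{(i)}(t)=\mu_{(j)}(t)$, hence $N_{i}(\mu(t))=N_{j}(\mu(t))$ (whatever their common value --- you never need it to equal $2$) and $\widetilde{\mu}_{i}(t)=\widetilde{\mu}_{j}(t)$ by \eqref{eq 1}, so the symmetry \eqref{eq 4.DG} gives $\vartheta_{i}(t)=\vartheta_{j}(t)$ and therefore $\frac{\vartheta_{i}(t)}{N_{i}(\mu(t))}\mathrm{d}\Lambda^{(i,j)}(t)=\frac{\vartheta_{j}(t)}{N_{j}(\mu(t))}\mathrm{d}\Lambda^{(i,j)}(t)$. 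This makes the fifth and sixth integrals in \eqref{eq 4.1.6} equal to each other after relabelling $(i,j)$, and likewise the seventh and eighth (where, as you correctly observed, only $N_{j}=N_{\nu}$ on the support is needed and symmetry plays no role), so the four local-time integrals cancel \emph{in pairs} rather than vanishing individually. Your version is correct as a proof of the corollary under the additional non-degeneracy hypothesis, but as stated it proves a weaker result than the one claimed.
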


\begin{proof}
Since the measurable function $DG$ is symmetric in the second argument, by \eqref{eq 4.DG} we have
\[
\frac{\vartheta_{i}(t)}{N_{i}(\mu(t))}\mathrm{d}\Lambda^{(i,j)}(t)=\frac{\vartheta_{j}(t)}{N_{j}(\mu(t))}\mathrm{d}\Lambda^{(j,i)}(t),\quad i,j\in\{1,\cdots,k\},~i\neq j,
\]
which implies
\begin{equation}\label{eq 4.L.1}
\int_{0}^{\cdot}\sum_{i=1}^{k}\sum_{j=i+1}^{k}\frac{\vartheta_{i}(t)\mathfrak{M}(t)}{N_{i}(\mu(t))}\mathrm{d}\Lambda^{(i,j)}(t)=\int_{0}^{\cdot}\sum_{i=1}^{k}\sum_{j=1}^{i-1}\frac{\vartheta_{i}(t)\mathfrak{M}(t)}{N_{i}(\mu(t))}\mathrm{d}\Lambda^{(j,i)}(t)
\end{equation}
and
\begin{equation}\label{eq 4.L.2}
\int_{0}^{\cdot}\sum_{i,j=1}^{k}\sum_{\nu=1}^{j-1}\frac{\vartheta_{i}(t)\widetilde{\mu}_{i}(t)\mathfrak{M}(t)}{N_{j}(\mu(t))}\mathrm{d}\Lambda^{(\nu,j)}(t)=\int_{0}^{\cdot}\sum_{i,j=1}^{k}\sum_{\nu=j+1}^{k}\frac{\vartheta_{i}(t)\widetilde{\mu}_{i}(t)\mathfrak{M}(t)}{N_{j}(\mu(t))}\mathrm{d}\Lambda^{(j,\nu)}(t).
\end{equation}
Then combining \eqref{eq 4.1.6}, \eqref{eq 4.L.1}, and \eqref{eq 4.L.2} yields the desired result.
\end{proof}

Recall the random permutation $p_{t}$ from \eqref{eq pt}.

\begin{corollary}
Let $G$ be a regular function for $\widetilde{\mu}(\cdot)$ with the corresponding measurable function $DG$ symmetric as by \eqref{eq 4.DG}. Assume that the market weights process $\mu(\cdot)$ is pathwise mutually non-degenerate as defined in Definition~\ref{def 1}. Then the finite-variation process $\Gamma(\cdot)$ given by \eqref{eq 4} now has the decomposition
\[
\Gamma(\cdot)=\widetilde{\Gamma}(\cdot)+L(\cdot),
\]
where
\[
\begin{aligned}
\widetilde{\Gamma}(\cdot)&=G(\widetilde{\mu}(0))-G(\widetilde{\mu}(\cdot))+\int_{0}^{\cdot}\sum_{i=1}^{k}\sum_{j=1}^{d}\vartheta_{i}(t)\mathfrak{M}(t)\mathbf{1}_{\{j=p_{t}(i)\}}\mathrm{d}\mu_{j}(t)\\
&\quad-\int_{0}^{\cdot}\sum_{i,j=1}^{k}\sum_{\nu=1}^{d}\vartheta_{i}(t)\widetilde{\mu}_{i}(t)\mathfrak{M}(t)\mathbf{1}_{\{\nu=p_{t}(j)\}}\mathrm{d}\mu_{\nu}(t)-\int_{0}^{\cdot}\sum_{i,j=1}^{k}\mathfrak{M}^{2}(t)\vartheta_{i}(t)\mathrm{d}\left[\mu_{(i)},\mu_{(j)}\right](t)\\
&\quad+\int_{0}^{\cdot}\sum_{i,j,\nu=1}^{k}\mathfrak{M}^{2}(t)\vartheta_{i}(t)\widetilde{\mu}_{i}(t)\mathrm{d}\left[\mu_{(j)},\mu_{(\nu)}\right](t)
\end{aligned}
\]
and
\[
L(\cdot)=\frac{1}{2}\int_{0}^{\cdot}\left(\vartheta_{k}(t)-\sum_{j=1}^{k}\vartheta_{j}(t)\widetilde{\mu}_{j}(t)\right)\mathfrak{M}(t)\mathrm{d}\Lambda^{(k,k+1)}(t)
\]
are both of finite variation on $[0,T]$, for all $T\geq0$.
\end{corollary}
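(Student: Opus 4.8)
The plan is to read this off from Proposition~\ref{prop 1} and Corollary~\ref{corollary 1} by feeding in the pathwise mutual non-degeneracy of $\mu(\cdot)$ at two places. Two consequences of Definition~\ref{def 1} will be used. First, by part~(1) the random set $\{t\ge0:N_i(\mu(t))\ge2\}$ is contained in $\bigcup_{m}\{t:\mu_{(m)}(t)=\mu_{(m+1)}(t)\}$, hence has Lebesgue measure zero, a.s. Second, by part~(2) no three of the ranked weights $\mu_{(\cdot)}(t)$ ever coincide, so $\{t:\mu_{(i)}(t)=\mu_{(j)}(t)\}=\emptyset$ whenever $j\ge i+2$; since the local-time measure $\mathrm{d}\Lambda^{(i,j)}$ is carried by $\{t:\mu_{(i)}(t)=\mu_{(j)}(t)\}$, this yields $\Lambda^{(i,j)}(\cdot)\equiv0$ for all $j\ge i+2$, and forces $N_i(\mu(t))=2$ at every $t$ in the support of $\mathrm{d}\Lambda^{(i,i+1)}$.

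For $L(\cdot)$ I would start from \eqref{eq 4.1.L}. In its first double sum $i\le k<j$, so $\Lambda^{(i,j)}$ can be nonzero only when $j=i+1$, which forces $i=k$ and $j=k+1$; likewise the triple sum retains only the term with $(j,\nu)=(k,k+1)$, leaving a residual sum over $i\in\{1,\cdots,k\}$. Replacing $N_k(\mu(t))$ by $2$ on the support of $\mathrm{d}\Lambda^{(k,k+1)}$ then collapses \eqref{eq 4.1.L} to
\[
\tfrac{1}{2}\int_{0}^{\cdot}\Bigl(\vartheta_{k}(t)-\sum_{j=1}^{k}\vartheta_{j}(t)\widetilde{\mu}_{j}(t)\Bigr)\mathfrak{M}(t)\,\mathrm{d}\Lambda^{(k,k+1)}(t),
\]
the asserted form of $L(\cdot)$. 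For $\widetilde{\Gamma}(\cdot)$ I would begin from the simplified expression in Corollary~\ref{corollary 1}, whose symmetry argument has already removed all internal local times, and replace in both $\mathrm{d}\mu$-integrals the factor $\mathbf{1}_{\{\mu_j(t)=\mu_{(i)}(t)\}}/N_i(\mu(t))$ by $\mathbf{1}_{\{j=p_t(i)\}}$: the two agree off $\{t:N_i(\mu(t))\ge2\}$ by \eqref{eq pt}, and that set, being Lebesgue-null, carries no mass for the integrators $\mathrm{d}\mu_j$ (equivalently, one may simply invoke the form of \eqref{eq l3} that is standard under Definition~\ref{def 1}). Reordering the two quadratic-covariation integrals then produces exactly the stated $\widetilde{\Gamma}(\cdot)$, and the identity $\Gamma(\cdot)=\widetilde{\Gamma}(\cdot)+L(\cdot)$ is inherited from Proposition~\ref{prop 1}.

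The finite-variation claims are then immediate: $\Gamma(\cdot)$ is of finite variation on $[0,T]$ by Definition~\ref{def 3}, $L(\cdot)$ because $\Lambda^{(k,k+1)}(\cdot)$ is a local time and the integrand $\bigl(\vartheta_k-\sum_{j=1}^k\vartheta_j\widetilde{\mu}_j\bigr)\mathfrak{M}$ is continuous and adapted (as already observed in the proof of Proposition~\ref{prop 1}), and hence $\widetilde{\Gamma}(\cdot)=\Gamma(\cdot)-L(\cdot)$ as well. The only step that really needs attention is the a.e.\ replacement of $\mathbf{1}_{\{\mu_j=\mu_{(i)}\}}/N_i$ by $\mathbf{1}_{\{j=p_t(i)\}}$ inside the stochastic integrals, that is, checking that the collision set carries no $\mathrm{d}\mu_j$-mass; everything else is bookkeeping. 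A slightly cleaner variant bypasses Corollary~\ref{corollary 1} altogether: substitute the non-degenerate form of \eqref{eq l3} directly into \eqref{eq 4.l.1} and thence into \eqref{eq 4}, and note the telescoping identity $\sum_{j=1}^{k}\bigl(\mathrm{d}\Lambda^{(j,j+1)}-\mathrm{d}\Lambda^{(j-1,j)}\bigr)=\mathrm{d}\Lambda^{(k,k+1)}$ (with the convention $\Lambda^{(0,1)}:=0$); once the internal terms $\Lambda^{(1,2)},\cdots,\Lambda^{(k-1,k)}$ are annihilated by the symmetry \eqref{eq 4.DG} of $DG$, exactly as in Corollary~\ref{corollary 1}, only the boundary local time $\Lambda^{(k,k+1)}$ survives, which is precisely what produces $L(\cdot)$.
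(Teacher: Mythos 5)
Your proof is correct and follows essentially the same route as the paper: the paper simply cites Proposition~4.1.11 of \citet{MR1894767} for the non-degenerate simplification \eqref{eq 4.1.mu} of the ranked-weight dynamics and then appeals to ``a similar reasoning as in the proof of Proposition~\ref{prop 1} and Corollary~\ref{corollary 1}'', which is exactly the substitution and bookkeeping you carry out explicitly (vanishing of $\Lambda^{(i,j)}$ for $j\geq i+2$, the factor $N_i=2$ on the support of $\mathrm{d}\Lambda^{(i,i+1)}$, and the collapse of \eqref{eq 4.1.L} to the single $(k,k+1)$ term). The one step you rightly flag as delicate --- that the Lebesgue-null collision set carries no mass for the integrators $\mathrm{d}\mu_j$ --- is precisely what the citation to Fernholz's Proposition~4.1.11 covers, and you offer that citation as the fallback, so there is no gap relative to the paper's own level of rigour.
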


\begin{proof}
By Proposition~4.1.11 in \citet{MR1894767}, when $\mu(\cdot)$ is pathwise mutually non-degenerate, \eqref{eq l3} simplifies to
\begin{equation}\label{eq 4.1.mu}
\mu_{(i)}(\cdot)=\mu_{(i)}(0)+\int_{0}^{\cdot}\sum_{j=1}^{d}\boldsymbol{1}_{\left\{j=p_{t}(i)\right\}}
\mathrm{d}\mu_{j}(t)+\frac{1}{2}\int_{0}^{\cdot}\mathrm{d}\Lambda^{(i,i+1)}(t)-\frac{1}{2}\int_{0}^{\cdot}\mathrm{d}\Lambda^{(i-1,i)}(t),
\end{equation}
for all $i\in\{1,\cdots,d\}$. Then, thanks to \eqref{eq 4.1.mu}, a similar reasoning as in the proof of Proposition~\ref{prop 1} and Corollary~\ref{corollary 1} yields the desired result.
\end{proof}

\subsection*{Leakage of multiplicatively generated trading strategies}

For a given regular function $G$ for $\widetilde{\mu}(\cdot)$, the wealth process $V^{\psi}(\cdot)$ of the target trading strategy $\psi(\cdot)$ with respect to $\widetilde{\mu}(\cdot)$ generated multiplicatively by $G$ can now be expressed through the master formula introduced in the following theorem.

\begin{theorem}\label{thm 4.1}
Let $\psi(\cdot)$ be the target trading strategy with respect to $\widetilde{\mu}(\cdot)$ generated multiplicatively from a regular function $G:\mathbb{W}^{k}_{+}\rightarrow(0,\infty)$ for $\widetilde{\mu}(\cdot)$ with $1/G(\widetilde{\mu}(\cdot))$ locally bounded. Then the wealth process $V^{\psi}(\cdot)$ of $\psi(\cdot)$ relative to the market $\mathcal{M}^{k}$ is given by the master formula
\begin{equation}\label{eq 4.1.3}
\log V^{\psi}(\cdot)=\log G(\widetilde{\mu}(\cdot))+\int_{0}^{\cdot}\frac{\mathrm{d}\widetilde{\Gamma}(t)}{G(\widetilde{\mu}(t))}+\int_{0}^{\cdot}\frac{\mathrm{d}L(t)}{G(\widetilde{\mu}(t))}
\end{equation}
with $\widetilde{\Gamma}(\cdot)$ and $L(\cdot)$ given by \eqref{eq 4.1.6} and \eqref{eq 4.1.L}, respectively.
\end{theorem}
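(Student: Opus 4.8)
The strategy is to read off \eqref{eq 4.1.3} from the multiplicative master formula of Lemma~\ref{lemma 1} combined with the additive decomposition $\Gamma(\cdot)=\widetilde{\Gamma}(\cdot)+L(\cdot)$ supplied by Proposition~\ref{prop 1}. First I would apply Lemma~\ref{lemma 1}: under the present hypotheses on $G$ (regular for $\widetilde{\mu}(\cdot)$, strictly positive, with $1/G(\widetilde{\mu}(\cdot))$ locally bounded) the wealth process of $\psi(\cdot)$ satisfies the master formula \eqref{eq 6}, namely
\[
V^{\psi}(\cdot)=G(\widetilde{\mu}(\cdot))\exp\left(\int_{0}^{\cdot}\frac{\mathrm{d}\Gamma(t)}{G(\widetilde{\mu}(t))}\right).
\]
Since $V^{\psi}(\cdot)>0$ and $G(\widetilde{\mu}(\cdot))>0$, taking logarithms on both sides gives
\[
\log V^{\psi}(\cdot)=\log G(\widetilde{\mu}(\cdot))+\int_{0}^{\cdot}\frac{\mathrm{d}\Gamma(t)}{G(\widetilde{\mu}(t))}.
\]

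Next I would invoke Proposition~\ref{prop 1}, which asserts that $\Gamma(\cdot)=\widetilde{\Gamma}(\cdot)+L(\cdot)$ with $\widetilde{\Gamma}(\cdot)$ and $L(\cdot)$ given by \eqref{eq 4.1.6} and \eqref{eq 4.1.L}, each of finite variation on $[0,T]$ for every $T\ge0$. The integrand $1/G(\widetilde{\mu}(\cdot))$ is continuous and adapted (as $G$ is continuous and strictly positive on $\mathbb{W}^{k}_{+}$ and $\widetilde{\mu}(\cdot)$ is a continuous $\mathbb{W}^{k}_{+}$-valued semimartingale) and locally bounded by hypothesis, so it is Lebesgue--Stieltjes integrable against the total-variation measures of both $\widetilde{\Gamma}(\cdot)$ and $L(\cdot)$ on each compact interval. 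Hence, by additivity of the Lebesgue--Stieltjes integral in the integrator ($\mathrm{d}\Gamma=\mathrm{d}\widetilde{\Gamma}+\mathrm{d}L$ as signed measures on $[0,T]$),
\[
\int_{0}^{\cdot}\frac{\mathrm{d}\Gamma(t)}{G(\widetilde{\mu}(t))}=\int_{0}^{\cdot}\frac{\mathrm{d}\widetilde{\Gamma}(t)}{G(\widetilde{\mu}(t))}+\int_{0}^{\cdot}\frac{\mathrm{d}L(t)}{G(\widetilde{\mu}(t))},
\]
with both terms on the right finite on compacts. Substituting this identity into the logarithmic formula above yields \eqref{eq 4.1.3}.

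The only delicate point --- and the closest thing to an obstacle in an otherwise routine argument --- is the verification that the two integrals on the right-hand side of \eqref{eq 4.1.3} are \emph{separately} well defined, that is, that $1/G(\widetilde{\mu}(\cdot))$ is integrable against $\mathrm{d}\widetilde{\Gamma}$ and against $\mathrm{d}L$ individually and not merely against their sum $\mathrm{d}\Gamma$. This is exactly what the local-boundedness hypothesis on $1/G(\widetilde{\mu}(\cdot))$ together with the finite-variation statements of Proposition~\ref{prop 1} provide; once it is in hand, the splitting of the integral is purely formal. It is worth noting, in line with Remark~\ref{rmk L}, that it is precisely the last term $\int_{0}^{\cdot}\mathrm{d}L(t)/G(\widetilde{\mu}(t))$ in \eqref{eq 4.1.3} that quantifies the leakage of $\psi(\cdot)$ accumulated through rebalancing.
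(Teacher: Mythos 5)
Your proposal is correct and follows essentially the same route as the paper: apply the multiplicative master formula \eqref{eq 6}, take logarithms, and split the integral via the decomposition $\Gamma(\cdot)=\widetilde{\Gamma}(\cdot)+L(\cdot)$ from Proposition~\ref{prop 1}. Your additional care in checking that the two Lebesgue--Stieltjes integrals are separately well defined is a welcome bit of rigour that the paper leaves implicit, but it does not change the argument.
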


\begin{proof}
Since $\psi(\cdot)$ is generated multiplicatively by $G$, the master formula \eqref{eq 6} implies
\[
\log V^{\psi}(\cdot)=\log G(\widetilde{\mu}(\cdot))+\int_{0}^{\cdot}\frac{\mathrm{d}\Gamma(t)}{G(\widetilde{\mu}(t))},
\]
which, together with Proposition~\ref{prop 1}, yield the desired result.
\end{proof}

The leakage $L^{\psi}(\cdot)$ of the trading strategy $\psi(\cdot)$ is then defined as the negative of the last term of \eqref{eq 4.1.3}, i.e.,
\begin{equation}\label{eq Lm}
L^{\psi}(\cdot)=-\int_{0}^{\cdot}\frac{\mathrm{d}L(t)}{G(\widetilde{\mu}(t))}
\end{equation}
with $L(\cdot)$ given by \eqref{eq 4.1.L}. It measures the cumulative lost in the (logarithmic) relative wealth $V^{\psi}(\cdot)$ due to renewing the portfolio constituents to stop investing in the smallest stocks, which are delisted from (``leaks'' out of) the portfolio subsequently. This explanation indicates the method to estimate the leakage $L^{\psi}(\cdot)$, as shown in the next section.

\begin{remark}
Our computation for the leakage here is different from, for example, Example~4.2 in \citet{MR1861997}. The method introduced in Example~4.2 in \citet{MR1861997} may lead to trading strategies which have positive portfolio weights for stocks of ranks larger than $k$ for some ranked portfolio generating functions $\boldsymbol{G}$ of $\boldsymbol{\mu}(\cdot)$. To see this, consider a ranked portfolio generating function
\[
\boldsymbol{G}(\boldsymbol{x})=1-\frac{1}{2}\sum_{j=1}^{k}x_{(j)}^{2},\quad\boldsymbol{x}\in\mathbb{W}^{d}_{+}.
\]
Let the trading strategy with respective to $\boldsymbol{\mu}(\cdot)$ be generated multiplicatively in the same manner as in Example~4.2 in \citet{MR1861997} by a portfolio generating function $\mathcal{G}$ of $\mu(\cdot)$ with $\mathcal{G}(x)=\boldsymbol{G}(\mathfrak{R}(x))$, for all $x\in\Delta^{d}_{+}$. Recall the random permutation $p_{t}$ from \eqref{eq pt}. Then, this strategy has portfolio weights
\[
\pi_{p_{t}(i)}(t)=\frac{1+\frac{1}{2}\sum_{j=1}^{k}\mu_{(j)}^{2}(t)}{1-\frac{1}{2}\sum_{j=1}^{k}\mu_{(j)}^{2}(t)}\mu_{(i)}(t)\geq0,\quad i\in\{k+1,\cdots,d\},~t\geq0,
\]
where the equality holds if and only if $\mu_{(i)}(t)=0$, which is in general not the case. To avoid this problem, instead of using $\boldsymbol{G}$ of $\boldsymbol{\mu}(\cdot)$ as the portfolio generating function, we use $G$ of $\widetilde{\mu}(\cdot)$ to generate target trading strategies with respect to $\widetilde{\mu}(\cdot)$.
\qed
\end{remark}

\subsection*{Leakage of additively generated trading strategies}

For a given regular function $G$ for $\widetilde{\mu}(\cdot)$, the wealth process $V^{\varphi}(\cdot)$ of the target trading strategy $\varphi(\cdot)$ with respect to $\widetilde{\mu}(\cdot)$ generated additively by $G$ can now be expressed through the master formula introduced in the following theorem.

\begin{theorem}\label{thm 4.2}
Let $\varphi(\cdot)$ be the target trading strategy with respect to $\widetilde{\mu}(\cdot)$ generated additively by a regular function $G:\mathbb{W}^{k}_{+}\rightarrow\mathbb{R}$ for $\widetilde{\mu}(\cdot)$. Then the wealth process $V^{\varphi}(\cdot)$ of $\varphi(\cdot)$ relative to the market $\mathcal{M}^{k}$ is given by the master formula
\begin{equation}\label{eq 4.1.9}
V^{\varphi}(\cdot)=G(\widetilde{\mu}(\cdot))+\widetilde{\Gamma}(\cdot)+L(\cdot)
\end{equation}
with $\widetilde{\Gamma}(\cdot)$ and $L(\cdot)$ given by \eqref{eq 4.1.6} and \eqref{eq 4.1.L}, respectively.
\end{theorem}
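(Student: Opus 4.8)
The plan is to mirror the argument for Theorem~\ref{thm 4.1} almost verbatim, replacing the multiplicative master formula by the additive one. First I would invoke Lemma~\ref{lemma 2}: since $\varphi(\cdot)$ is the target trading strategy with respect to $\widetilde{\mu}(\cdot)$ generated additively from the regular function $G$, the master formula \eqref{eq 8} applies and gives $V^{\varphi}(\cdot)=G(\widetilde{\mu}(\cdot))+\Gamma(\cdot)$, where $\Gamma(\cdot)$ is the finite-variation process defined in \eqref{eq 4}. Note that, unlike Lemma~\ref{lemma 1}, Lemma~\ref{lemma 2} requires neither positivity of $G$ nor local boundedness of $1/G(\widetilde{\mu}(\cdot))$, so the hypotheses stated in the theorem are exactly the ones needed at this step.

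Next I would substitute the rank decomposition of $\Gamma(\cdot)$ supplied by Proposition~\ref{prop 1}, namely $\Gamma(\cdot)=\widetilde{\Gamma}(\cdot)+L(\cdot)$ with $\widetilde{\Gamma}(\cdot)$ and $L(\cdot)$ the finite-variation processes given by \eqref{eq 4.1.6} and \eqref{eq 4.1.L}. Plugging this into the identity from the previous paragraph yields $V^{\varphi}(\cdot)=G(\widetilde{\mu}(\cdot))+\widetilde{\Gamma}(\cdot)+L(\cdot)$, which is precisely \eqref{eq 4.1.9}, and that completes the argument.

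I do not expect any genuine obstacle here: all of the substantive work --- expanding $\mathrm{d}\widetilde{\mu}_{i}$ via Lemma~\ref{lemma 3} into name-indexed increments, covariations of the ranked weights, and the boundary local times $\Lambda^{(i,j)}$, and then separating the ``internal'' contributions $\widetilde{\Gamma}(\cdot)$ from the ``leakage'' contributions $L(\cdot)$ --- was already carried out in the proof of Proposition~\ref{prop 1}. The only subtlety worth flagging, perhaps in a remark, is that the hypotheses of this theorem are strictly weaker than those of Theorem~\ref{thm 4.1}, because additive generation does not require $1/G(\widetilde{\mu}(\cdot))$ to be locally bounded; thus the additive master formula is available under the bare regularity assumption on $G$, and the leakage $L^{\varphi}(\cdot)$ of $\varphi(\cdot)$ can then be defined, in parallel with \eqref{eq Lm}, as the negative of the last term $L(\cdot)$ in \eqref{eq 4.1.9}.
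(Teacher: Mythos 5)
Your proposal is correct and follows exactly the paper's own argument: apply the additive master formula \eqref{eq 8} from Lemma~\ref{lemma 2} and then substitute the decomposition $\Gamma(\cdot)=\widetilde{\Gamma}(\cdot)+L(\cdot)$ from Proposition~\ref{prop 1}. Your additional observation that the hypotheses here are weaker than in Theorem~\ref{thm 4.1} (no positivity of $G$ or local boundedness of $1/G(\widetilde{\mu}(\cdot))$ required) is accurate and consistent with how the paper states the two theorems.
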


\begin{proof}
As $\varphi(\cdot)$ is generated additively by $G$, the master formula \eqref{eq 8} and Proposition~\ref{prop 1} yield the desired result.
\end{proof}

Similar to \eqref{eq Lm}, the negative of the last term of \eqref{eq 4.1.9} is interpreted as the leakage $L^{\varphi}(\cdot)$ of $\varphi(\cdot)$, i.e.,
\begin{equation}\label{eq La}
L^{\varphi}(\cdot)=-L(\cdot).
\end{equation}
Once again, $L^{\varphi}(\cdot)$ measures the cumulative lost in the relative wealth $V^{\varphi}(\cdot)$ from keeping investing in the smallest stocks in the portfolio, which should be delisted from the portfolio already for not being in the top $k$ stocks.

\section{Estimation of the leakage}\label{sec 4}

While the computation of leakage involves the dynamic of a local time in continuous time, in practice, inspired by the financial meaning of leakage, we are able to estimate it directly without calculating the local time.

To this end, we consider a short time period from time $0$ to time $1$. Assume no trade is made between time $0$ and time $1$. In particular, let $(\mathfrak{p}_{1},\cdots,\mathfrak{p}_{d})$ be a permutation of $(1,\cdots,d)$ such that
\begin{equation}\label{eq 4.2.1}
\mu_{\mathfrak{p}_{i}}(0)=\mu_{(i)}(0),\quad i\in\{1,\cdots,d\}.
\end{equation}
Then the market weights process $\widehat{\mu}(\cdot)=(\widehat{\mu}_{\mathfrak{p}_{1}}(\cdot),\cdots,\widehat{\mu}_{\mathfrak{p}_{k}}(\cdot))'$ of the market that consists of the top $k$ stocks at time $0$ has components
\begin{equation}\label{eq 4.2.2}
\widehat{\mu}_{\mathfrak{p}_{i}}(\cdot)=\frac{\mu_{\mathfrak{p}_{i}}(\cdot)}{\sum_{j=1}^{k}\mu_{\mathfrak{p}_{j}}(\cdot)},\quad i\in\{1,\cdots,k\}.
\end{equation}
Note that
\begin{equation}\label{eq 4.2.0}
\widehat{\mu}_{\mathfrak{p}_{i}}(0)=\widetilde{\mu}_{i}(0)=\mathfrak{M}(0)\mu_{\mathfrak{p}_{i}}(0),\quad i\in\{1,\cdots,k\},
\end{equation}
by \eqref{eq 1}, \eqref{eq 4.2.1}, and \eqref{eq 4.2.2}.

\subsection*{Estimating the leakage of a multiplicatively generated target trading strategy}

For a target trading strategy $\psi(\cdot)$ generated multiplicatively by a regular function $G$ for $\widetilde{\mu}(\cdot)$, we estimate the leakage $L^{\psi}(\cdot)$ at time $1$ as in the following.

Let us first consider the implemented trading strategy $\widehat{\psi}(\cdot)$ which is generated multiplicatively by $G$ for $\widehat{\mu}(\cdot)$. Then, on the one hand, by Lemma~\ref{lemma 1}, we have
\begin{equation}\label{eq 4.2.11}
\log\widehat{V}^{\widehat{\psi}}(1)\approx\log\widehat{V}^{\widehat{\psi}}(0)+\log G(\widehat{\mu}(1))-\log G(\widehat{\mu}(0))+\frac{\widehat{\Gamma}(1)-\widehat{\Gamma}(0)}{G(\widehat{\mu}(0))},
\end{equation}
where
\[
\widehat{V}^{\widehat{\psi}}(\cdot)=\sum_{j=1}^{k}\widehat{\psi}_{j}(\cdot)\widehat{\mu}_{\mathfrak{p}_{j}}(\cdot)
\]
and
\[
\mathrm{d}\widehat{\Gamma}(0)=-\mathrm{d}G(\widehat{\mu}(0))+\sum_{j=1}^{k}D_{i}G(\widehat{\mu}(0))\mathrm{d}\widehat{\mu}_{\mathfrak{p}_{j}}(0).
\]
On the other hand, since $\widehat{\psi}(0)=\psi(0)$ by \eqref{eq 4.2.0}, if we assume that $\widetilde{\mu}(1)=\widehat{\mu}(1)$, Lemma~\ref{lemma 1} also implies
\begin{equation}\label{eq 4.2.12}
\log V^{\psi}(1)\approx\log\widehat{V}^{\widehat{\psi}}(0)+\log G(\widetilde{\mu}(1))-\log G(\widehat{\mu}(0))+\frac{\widehat{\Gamma}(1)-\widehat{\Gamma}(0)}{G(\widehat{\mu}(0))}.
\end{equation}

Then, in the case $\widetilde{\mu}(1)\neq\widehat{\mu}(1)$, Theorem~\ref{thm 4.1} suggests that the change in the leakage $L^{\psi}(\cdot)$ from time $0$ to time $1$ should be estimated as a correction term in the wealth of $\psi(\cdot)$ due to renewing the constituent list, such that
\begin{equation}\label{eq 4.2.13}
\log V^{\psi}(1)+L^{\psi}(1)-L^{\psi}(0)\approx\log\widehat{V}^{\widehat{\psi}}(1).
\end{equation}
Therefore, combining \eqref{eq 4.2.11} to \eqref{eq 4.2.13} yields
\begin{equation}\label{eq 4.2.3}
L^{\psi}(1)-L^{\psi}(0)\approx\log G(\widehat{\mu}(1))-\log G(\widetilde{\mu}(1)).
\end{equation}

Over an investment horizon $[0,T]$ with $T>0$, the leakage $L^{\psi}(T)$ is estimated as the sum of expressions of the form \eqref{eq 4.2.3} for all trading days, on which the constituent list of $\psi(\cdot)$ changes, in $[0,T]$. Accordingly, $L^{\psi}(\cdot)$ measures the cumulative net loss in the (logarithmic) relative wealth $V^{\psi}(\cdot)$ from renewing the portfolio constituents.

\subsection*{Estimating the leakage of an additively generated target trading strategy}

The same technique above can be applied to the estimation of the leakage of a target trading strategy generated additively. For a target trading strategy $\varphi(\cdot)$ generated additively by a regular function $G$ for $\widetilde{\mu}(\cdot)$, we estimate the change in the leakage $L^{\varphi}(\cdot)$ at time $1$ by
\begin{equation}\label{eq 4.2.9}
L^{\varphi}(1)-L^{\varphi}(0)\approx G(\widehat{\mu}(1))-G(\widetilde{\mu}(1)).
\end{equation}
Hence, the leakage $L^{\varphi}(T)$ over an investment horizon $[0,T]$ with $T>0$ is estimated by summing expressions of the form \eqref{eq 4.2.9} for all trading days, on which the constituent list of $\varphi(\cdot)$ changes, in $[0,T]$. Once again, the leakage $L^{\varphi}(\cdot)$ measures the cumulative net loss in the relative wealth $V^{\varphi}(\cdot)$ from renewing the portfolio constituents.

\section{Practical considerations for backtesting and estimating the leakage}\label{sec 5}

In this section, we introduce the method of backtesting the performance and estimating the leakage of a target trading strategy from given market capitalisations $S(\cdot)$ and daily returns $r(\cdot)$ of all stocks. The empirical analysis followes in the next section. 

We consider a frictionless market $\mathcal{M}^{k}$, which consists of the largest $k$ stocks in terms of market capitalisations among all stocks traded. The portfolio is rebalanced and the constituent list of stocks in $\mathcal{M}^{k}$ is renewed simultaneously with a daily frequency. Note that renewing the constituent list implies trading to replace the old top $k$ stocks with the new top $k$ stocks.

Assume that there are in total $N$ trading days (exclusive of the start day). For $l\in\{1,\cdots,N\}$, let $t_{l}$ denote the end of trading day $l$, at which the end of day market capitalizations and daily returns for trading day $l$ are available and the portfolio is rebalanced. In the following, we fix $l\in\{1,\cdots,N\}$ and consider the wealth dynamic and leakage of a target trading strategy $\phi(\cdot)$ generated either multiplicatively or additively by a regular function $G$ for $\widetilde{\mu}(\cdot)$ at time $t_{l}$. In particular, let $\{\mathfrak{p}_{1},\cdots,\mathfrak{p}_{k}\}$ and $\{1,\cdots,k\}$ be the indices of stocks in terms of names in the market $\mathcal{M}^{k}$ after renewing at time $t_{l-1}$ and time $t_{l}$, respectively, such that
\[
S_{\mathfrak{p}_{i}}(t_{l-1})\geq S_{\mathfrak{p}_{j}}(t_{l-1})\quad\text{and}\quad S_{i}(t_{l})\geq S_{j}(t_{l}),\quad i,j\in\{1,\cdots,k\},~i\leq j.
\]

At time $t_{l}$, the market capitalisations $S(t_{l})$ and daily returns $r(t_{l})$ of all stocks at the end of the trading day $l$ are known. The market weights $\widehat{\mu}(t_{l})=(\widehat{\mu}_{\mathfrak{p}_{1}}(t_{l}),\cdots,\widehat{\mu}_{\mathfrak{p}_{k}}(t_{l}))'$ and $\widetilde{\mu}(t_{l})=(\widetilde{\mu}_{1}(t_{l}),\cdots,\widetilde{\mu}_{k}(t_{l}))'$ are then computed by
\begin{equation}\label{eq 4.3.1}
\widehat{\mu}_{\mathfrak{p}_{i}}(t_{l})=\frac{S_{\mathfrak{p}_{i}}(t_{l-1})\left(1+r_{\mathfrak{p}_{i}}(t_{l})\right)}{\sum_{j=1}^{k}S_{\mathfrak{p}_{j}}(t_{l-1})\left(1+r_{\mathfrak{p}_{j}}(t_{l})\right)}\quad\text{and}\quad\widetilde{\mu}_{i}(t_{l})=\frac{S_{i}(t_{l})}{\sum_{j=1}^{k}S_{j}(t_{l})},
\end{equation}
respectively, for all $i\in\{1,\cdots,k\}$. Given $\phi(t_{l-1})=(\phi_{\mathfrak{p}_{1}}(t_{l-1}),\cdots,\phi_{\mathfrak{p}_{k}}(t_{l-1}))'$, the wealth of $\phi(\cdot)$ relative to the market $\mathcal{M}^{k}$ at time $t_{l}$ is computed by
\begin{equation}\label{eq 4.3.2}
V^{\phi}(t_{l})=\frac{\sum_{j=1}^{k}\phi_{\mathfrak{p}_{j}}(t_{l-1})S_{\mathfrak{p}_{j}}(t_{l-1})\left(1+r_{\mathfrak{p}_{j}}(t_{l})\right)}{\sum_{j=1}^{k}S_{j}(t_{l})}.
\end{equation}

\subsection*{Multiplicative generation}

If $\phi(\cdot)$ is generated multiplicatively, then by \eqref{eq 4.2.3}, we estimate the leakage $L^{\phi}(t_{l})$ by
\[
L^{\phi}(t_{l})=L^{\phi}(t_{l-1})+\log G(\widehat{\mu}(t_{l}))-\log G(\widetilde{\mu}(t_{l}))
\]
with $\widehat{\mu}(t_{l})$ and $\widetilde{\mu}(t_{l})$ given by \eqref{eq 4.3.1}.
 
According to \eqref{eq 7}, we rebalance the portfolio at time $t_{l}$ to match the target portfolio weights $\pi(t_{l})=(\pi_{1}(t_{l}),\cdots,\pi_{k}(t_{l}))'$, which has components
\begin{equation}\label{eq 4.3.5}
\pi_{i}(t_{l})=\frac{\widetilde{\mu}_{i}(t_{l})}{G(\widetilde{\mu}(t_{l}))}\left(\vartheta_{i}(t_{l})+G(\widetilde{\mu}(t_{l}))-\sum_{j=1}^{k}\vartheta_{j}(t_{l})\widetilde{\mu}_{j}(t_{l})\right),
\end{equation}
for all $i\in\{1,\cdots,k\}$. As a result, we compute $\phi(t_{l})=(\phi_{1}(t_{l}),\cdots,\phi_{k}(t_{l}))'$ by
\begin{equation}\label{eq 4.3.4}
\phi_{i}(t_{l})=\frac{\pi_{i}(t_{l})\sum_{j=1}^{k}\phi_{\mathfrak{p}_{j}}(t_{l-1})S_{\mathfrak{p}_{j}}(t_{l-1})\left(1+r_{\mathfrak{p}_{j}}(t_{l})\right)}{S_{i}(t_{l})},\quad i\in\{1,\cdots,k\}.
\end{equation}

\subsection*{Additive generation}

If $\phi(\cdot)$ is generated additively, then the leakage $L^{\phi}(t_{l})$ is estimated according to \eqref{eq 4.2.9} by
\[
L^{\phi}(t_{l})=L^{\phi}(t_{l-1})+G(\widehat{\mu}(t_{l}))-G(\widetilde{\mu}(t_{l}))
\]
with $\widehat{\mu}(t_{l})$ and $\widetilde{\mu}(t_{l})$ given by \eqref{eq 4.3.1}.

Similarly, as suggested by \eqref{eq 9}, the portfolio is rebalanced at time $t_{l}$ to match the target portfolio weights $\pi(t_{l})=(\pi_{1}(t_{l}),\cdots,\pi_{k}(t_{l}))'$ with components
\begin{equation}\label{eq 4.3.3}
\pi_{i}(t_{l})=\frac{\widetilde{\mu}_{i}(t_{l})}{V^{\phi}(t_{l})}\left(\vartheta_{i}(t_{l})+V^{\phi}(t_{l})-\sum_{j=1}^{k}\vartheta_{j}(t_{l})\widetilde{\mu}_{j}(t_{l})\right),\quad i\in\{1,\cdots,k\},
\end{equation}
with $V^{\phi}(t_{l})$ given by \eqref{eq 4.3.2}. Therefore, $\phi(t_{l})$ is computed by \eqref{eq 4.3.4} with $\pi(t_{l})$ given by \eqref{eq 4.3.3}.

\section{Example and empirical results}\label{sec 6}

In this section, we study an example empirically and estimate the leakage of target trading strategies involved with portfolio sizes $k=100,~300,~500$, respectively. The data used for analysis is downloaded from the CRSP US Stock Database\footnote{See \url{http://www.crsp.com/products/research-products/crsp-us-stock-databases} for details. The data starts January 2nd, 1962 and ends December 30th, 2016.}. For the sake of a better interpretability, we normalise $G(\widetilde{\mu}(0))=1$ by replacing $G$ with $G/G(\widetilde{\mu}(0))$.

\subsection*{Entropy-weighted portfolio}

The entropy-weighted portfolio is generated by the portfolio generating function
\begin{equation}\label{eq 4.4.1}
G(x)=-\sum_{j=1}^{k}x_{j}\log x_{j},\quad x\in\Delta^{k}_{+}.
\end{equation}
Let $\psi(\cdot)$ be the target trading strategy generated multiplicatively by \eqref{eq 4.4.1}. The logarithm of the relative wealth processes $V^{\psi}(\cdot)$ and the corresponding estimated leakage $L^{\psi}(\cdot)$ in absolute value under different constituent list sizes $k$ are shown in Figure~\ref{fg L_1}. The portfolio with a smaller $k$ performs worse and the corresponding leakage is larger in absolute value.

\begin{figure}[h!]
\includegraphics[width=\textwidth]{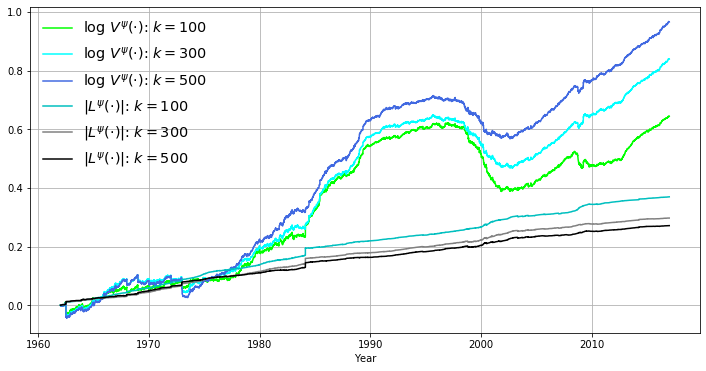}
\caption{The logarithm of the wealth $V^{\psi}(\cdot)$ relative to the market $\mathcal{M}^{k}$ and the corresponding estimated leakage $L^{\psi}(\cdot)$ in absolute value under different constituent list sizes $k$.}
\label{fg L_1}
\end{figure}

For the target trading strategy $\varphi(\cdot)$ generated additively by \eqref{eq 4.4.1}, its relative wealth processes $V^{\varphi}(\cdot)$ and the corresponding estimated leakage $L^{\varphi}(\cdot)$ in absolute value under different constituent list sizes $k$ are shown in Figure~\ref{fg L_2}.

\begin{figure}[h!]
\includegraphics[width=\textwidth]{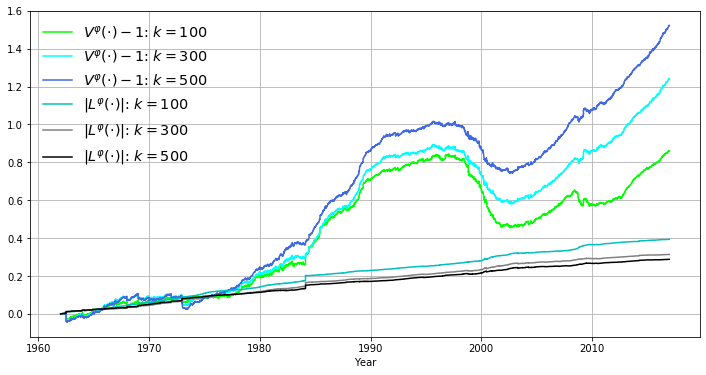}
\caption{The wealth $V^{\varphi}(\cdot)$ relative to the market $\mathcal{M}^{k}$ and the corresponding estimated leakage $L^{\varphi}(\cdot)$ in absolute value under different constituent list sizes $k$.}
\label{fg L_2}
\end{figure}

\bibliography{Leakage}
\bibliographystyle{chicago}
\end{document}